\newtheorem{Thm}{Theorem}
\newtheorem{Lem}[Thm]{Lemma}
\newtheorem{Cor}[Thm]{Corollary}
\newtheorem{Prop}[Thm]{Proposition}
\newtheorem{Fact}[Thm]{Fact}
\newtheorem{Def}{Definition}
\newenvironment{proof}{\noindent {\textbf{Proof }}}{$\Box$ \medskip}
\newcommand\mbC{\mbox{$\mathbb{C}$}}
\newcommand\Qb{\mathcal{H}}
\newcommand\defeq{=}
\newcommand\rank{\mbox{\tt {rank}}\xspace}
\newcommand\ket[1]{| #1 \rangle}
\newcommand\bra[1]{\langle #1 |}
\newcommand\qip[2]{\langle #1 | #2 \rangle}
\newcommand\Id{\mathbb{I}}
\newcommand\Ref[1]{Ref.~\cite{#1}}
\newcommand{\Empty}{\circleddash}
\newcommand{\Bad}{X}
\newcommand\QMA{{\mathrm{QMA}}}
\newcommand{\suppress}[1]{}
\newcommand{\Eq}[1]{Eq.~(\ref{#1})}
\newcommand{\gs}{\Gamma}
\newcommand\slide{{\mathsf{slide}}}
\newcommand\extt{{\mathsf{ext}}}
\newcommand\spa{{\mathsf{span}}}
\newcommand\pro{\mathsf{prop}}
\newcommand\supp{\mathsf{supp}}
\begin{document}
\title{\textbf{Linear time algorithm for quantum 2SAT}}
\author{
Itai Arad\thanks{Centre for Quantum Technologies, National University of Singapore, 
Singapore 117543 ({\tt arad.itai@fastmail.com}).}
\and
Miklos Santha\thanks{CNRS, IRIF, Universit\'e Paris Diderot 75205 Paris, France;  and 
Centre for Quantum Technologies, National University of Singapore, 
Singapore 117543 ({\tt miklos.santha@gmail.com}).}
\and
Aarthi Sundaram\thanks{Centre for Quantum Technologies, National University of Singapore, 
Singapore 117543 ({\tt aarthims@gmail.com}).}
\and
Shengyu Zhang\thanks{Department of Computer Science and Engineering, The Chinese University of Hong Kong, Shatin, N.T., Hong Kong. Email: {\tt syzhang@cse.cuhk.edu.hk}}
}
\maketitle

\begin{abstract}
  A canonical result about satisfiability theory is that the 2-SAT
  problem can be solved in linear time, despite the NP-hardness of
  the 3-SAT problem. In the quantum 2-SAT problem, we are given a
  family of 2-qubit projectors $\Pi_{ij}$ on a system of $n$ qubits,
  and the task is to decide whether the Hamiltonian $H=\sum
  \Pi_{ij}$ has a 0-eigenvalue, or it is larger than $1/n^\alpha$
  for some $\alpha=O(1)$. The problem is not only a natural
  extension of the classical 2-SAT problem to the quantum case, but
  is also equivalent to the problem of finding the ground state of
  2-local frustration-free Hamiltonians of spin $\frac{1}{2}$, a
  well-studied model believed to capture certain key properties in
  modern condensed matter physics. While Bravyi has shown that the
  quantum 2-SAT problem has a classical polynomial-time algorithm,
  the running time of his algorithm is $O(n^4)$. In this paper we
  give a classical algorithm with linear running time in the number
  of local projectors, therefore achieving the best possible
  complexity.
\end{abstract}

\thispagestyle{empty}
\clearpage
\setcounter{page}{1}


\section{Introduction}
\label{sec:intro}

Various formulations of the satisfiability problem of Boolean
formulae arguably constitute the center piece of classical
complexity theory. In particular, a great amount of attention has
been paid to the SAT problem, in which we are given a formula in the form of 
a conjunction of \emph{clauses}, where each clause is a disjunction of
\emph{literals} (variables or negated variables), and the task is
to find a satisfying assignment if there is one, or prove that none
exists when the formula is unsatisfiable. In the case of the $k$-SAT
problem, where $k$ is a positive integer, in each clause the number
of literals is at most $k$. While $k$-SAT is an NP-complete 
problem~\cite{Cook71,Karp72,Lev73} when $k\geq 3$, the problem 2-SAT
is well-known to be efficiently solvable. 

Polynomial time algorithms for 2-SAT come in various flavors. Let us
suppose that the input formula has $n$ variables and $m$ clauses.
The algorithm of Krom~\cite{Krom76} based on the resolution
principle and on transitive closure computation decides if the
formula is satisfiable in time $O(n^3)$ and finds a satisfying
assignment in time $O(n^4)$. The limited backtracking technique of
Even, Itai and Shamir~\cite{EIS76} has linear time complexity in
$m$, as well as the elegant procedure of Aspvall, Plass and
Tarjan~\cite{APT79} based on computing strongly connected components
in a graph. A particularly simple randomized procedure of complexity
$O(n^2)$ is described by Papadimitriou~\cite{Pap91}.

For our purposes the Davis-Putnam procedure~\cite{DP60} is of
singular importance. This is a resolution-principle based general
SAT solving algorithm, which with its refinement due to Davis,
Putnam, Logemann and Loveland~\cite{DPLL62}, forms even today the
basis for the most efficient SAT solvers. While on general SAT
instances it works in exponential time, on 2-SAT formulae it is of
polynomial complexity.

The high level description of the procedure for 2-SAT is relatively
simple.  Let us suppose that our formula $\phi$ contains only
clauses with two literals. Pick an arbitrary unassigned variable
$x_i$ and assign $x_i = 0$. The formula is simplified: a clause
$(\bar{x}_i \vee x_j)$ becomes true and therefore can be removed,
and a clause $(x_i \vee x_j)$ forces $x_j = 1$. This can be, in
turn, propagated to other clauses to further simplify the formula
until a contradiction is found or no more propagation is possible.
If no contradiction is found and the propagation stops with the
simplified formula $\phi_0$, then we recurse on the satisfiabilty of
$\phi_0$.  Otherwise, when a contradiction is found, that is
at some point the propagation assigns two different values to the
same variable, we reverse the choice made for $x_i$, and
propagate the new choice $x_i=1$. If this also leads to
contradiction we declare $\phi$ unsatisfiable, otherwise we recurse
on the result of this propagation, the simplified formula $\phi_1$.

There is a deep and profound link between $k$-SAT formulas
and
$k$-local Hamiltonians, the central objects of condensed matter
physics. A $k$-local Hamiltonian on $n$ qubits is a Hermitian
operator of the form $H=\sum_{i=1}^m
h_i$, where each $h_i$ is by itself a
Hermitian operator acting non-trivially on at most $k$ qubits. Local
Hamiltonians model the local interactions between quantum spins. Of
central importance is the minimal eigenstate of the Hamiltonian,
known as the \emph{ground state}, and its associated eigenvalue,
known as the \emph{ground energy}. The ground state governs much of
the low temperature physics of the system, such as quantum phase
transitions and collective quantum
phenomena~\cite{ref:Sachdev,ref:Vidal-2003}. Finding the ground
state of a local Hamiltonian shares important similarities with the
$k$-SAT problem: in both problems we are trying to find a global
minimum of a set of local constraints.  This
connection with complexity theory is of physical significance.
Indeed, with
the advent of quantum information theory and quantum complexity
theories, it has become clear that the complexity of finding the
ground state and its energy is intimately related to its
entanglement structure. In recent years, much attention has been
devoted into understanding this structure, revealing a rich an
intricate behaviour such as area laws~\cite{ECP10} and topological
order~\cite{ref:toric}. 

The connection between classical $k$-SAT and quantum local
Hamiltonian was formalized by Kitaev~\cite{Kit02} who introduced the
$k$-local Hamiltonian problem: one is given a $k$-local Hamiltonian
$H$, along with two constants $a < b$ such that $b-a>1/n^\alpha$ for
some constant $\alpha$. It is promised that the ground energy of $H$
is at most $a$ (the YES case) or is at least $b$ (the NO case), and
the task is to decide which case holds.  Broadly speaking, given a
quantum state $\ket{\psi}$, the energy of a local term
$\bra{\psi}h_i\ket{\psi}$ is a measure of how much $\ket{\psi}$
``violates'' $h_i$, hence the ground energy is the quantum analog of
the minimal number of violations in a classical $k$-SAT. Therefore,
in spirit, the $k$-local Hamiltonian problem corresponds to
MAX-$k$-SAT, and indeed Kitaev has shown~\cite{Kit02} that 5-local
Hamiltonian is QMA-complete, where the complexity class QMA is the
quantum analogue of classical class MA, the probabilistic version of
NP.

The problem \emph{quantum $k$}-SAT, the quantum analogue of $k$-SAT,
is a close relative of the $k$-local Hamiltonian problem. Here we
are given a $k$-local Hamiltonian that is made of $k$-local
\emph{projectors}, $H=\sum_{i=1}^m Q_i$, and we are asked whether
the ground energy is 0 or it is larger than $b=1/n^\alpha$ for some
constant $\alpha$. Notice that in the YES case, the energy of all
projectors at the ground state is necessarily 0, since by
definition, projectors are non-negative operators. Classically, this
corresponds to a perfectly satisfiable formula. Physically, this is
an example of a \emph{frustration-free} Hamiltonian, in which the
global ground state is also a ground state of every local term. 
Bravyi~\cite{Bra06} has shown that quantum $k$-SAT was
$\QMA_1$-complete for $k \geq 4$, where $\QMA_1$ stands for $\QMA$
with one-sided error (that is on YES instances the verifier accepts
with probability 1). The $\QMA_1$-completeness of quantum 3-SAT was
recently proven by Nagaj~\cite{GN13}. 

This paper is concerned with the quantum 2-SAT problem, which we
will also denote simply by Q2SAT. One major result concerning this
problem is due to Bravyi~\cite{Bra06}, who has proven that it
belongs to the complexity class P. More precisely, he has proven
that Q2SAT can be decided by a deterministic algorithm in time
$O(n^4)$, together with a ground state that has a polynomial
classical description. In the case of Q2SAT, the
Hamiltonian is given as a sum of 2-qubits projectors; each projector
is defined on a 4-dimensional Hilbert space and can therefore be of
rank 1, 2 or 3. In this paper, we give an algorithm for Q2SAT of \textit{linear} complexity.

\begin{Thm}
\label{thm:intro} 
  There is a deterministic algorithm for ${\rm Q2SAT}$ whose running
  time is $O(n+m)$ where $n$ is the number of variables and m is the number of local terms in the
  Hamiltonian.
\end{Thm}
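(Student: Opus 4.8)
The plan is to lift the classical Davis--Putnam / unit-propagation algorithm for 2SAT to the quantum setting, and then implement the whole procedure so that each projector is ``touched'' only a constant number of times, yielding an $O(n+m)$ bound. The starting observation, due to Bravyi, is that a rank-$1$ projector $\Pi_{ij}=\ket{\phi}\bra{\phi}$ on qubits $i,j$ acts like a quantum constraint: if $\ket{\phi}$ is a product state $\ket{a}_i\otimes\ket{b}_j$, then any satisfying global state must be orthogonal to $\ket{a}$ on qubit $i$ \emph{or} orthogonal to $\ket{b}$ on qubit $j$; if $\ket{\phi}$ is entangled (Schmidt rank $2$), then fixing the state of qubit $i$ to any $\ket{u}_i$ forces the state of qubit $j$ to a unique determined vector $\ket{v(u)}_j$ --- this is the quantum analogue of unit propagation, ``$x_i=c \Rightarrow x_j = c'$''. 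Rank-$2$ and rank-$3$ projectors are handled by first reducing them: a rank-$3$ projector $\Pi_{ij}=\Id-\ket{\phi}\bra{\phi}$ forces the \emph{two}-qubit state on $(i,j)$ to be exactly $\ket{\phi}$, so if two qubits are covered by such a term they can be collapsed/eliminated, and a rank-$2$ projector can be rewritten as a sum of two rank-$1$ projectors after a suitable local change of basis, or else also forces a product structure; in all cases we reduce to a working set of rank-$1$ constraints plus a partial assignment of single-qubit states.

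Concretely I would maintain, as the algorithm runs: (i) for each qubit a ``status'' that is either free, or assigned a one-dimensional subspace $\ket{u}_i$ (i.e.\ that qubit's value is forced up to phase), and (ii) a graph/adjacency structure on qubits recording which projectors relate them. The core loop mirrors the classical one: pick a free qubit $i$, branch on assigning it one of the (at most two) ``dangerous'' directions imposed by an incident product-type constraint, push that assignment through a propagation queue. Each propagation step pops a newly-assigned qubit, looks at its incident constraints, and for an entangled (Schmidt-rank-$2$) constraint deduces the forced direction of the neighbour, for a product constraint either deduces the neighbour must be orthogonal to $\ket{b}$ (so the constraint is satisfied and removed) or detects a contradiction; if the neighbour was already assigned, we check consistency of the two one-qubit states (a constant-time inner-product/parallelism test). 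A contradiction triggers backtracking to the last branch point and trying the other option; two failures at a branch point mean the instance is a NO instance. The delicate point, exactly as in classical linear-time 2SAT, is that we must not revisit work: once a constraint has been used to propagate a forced value it should be marked ``done'', assignments should only be \emph{extended} (never rolled back and recomputed from scratch except along the single branch being abandoned), and the amortized accounting must show the total number of propagation events is $O(n+m)$.

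I expect the main obstacle to be precisely this \textbf{amortization of the propagation across branchings}. In the classical case the key fact is that once a variable is ``fixed'' by a chain of forced implications it never needs to change within the same top-level branch, and the strongly-connected-component / backtracking structure guarantees linear total work; the quantum subtlety is that ``being forced'' is now ``lying in a specified $1$-dimensional subspace'', and one must argue that (a) a forced qubit can be forced to at most one direction consistently (otherwise contradiction), (b) re-deriving a direction that equals the one already present is free to detect, and (c) the recursion on the simplified instance $\phi_0$ (after a successful propagation that leaves some qubits still free) does not re-traverse already-resolved constraints. I would set this up by showing that after a maximal propagation round every still-relevant constraint has at least one free endpoint, so the recursion ``owns'' a disjoint chunk of the constraint list, giving the telescoping $\sum (\text{work in round } t) = O(n+m)$. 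A secondary, more mechanical obstacle is the data-structure bookkeeping: representing single-qubit states and the deduced product/entangled decompositions with exact arithmetic (or symbolic representatives) so that each inner-product or ``is this vector proportional to that one'' test is genuinely $O(1)$, and so that reading off the polynomial-size description of the final ground state in the YES case is also linear. Once those two pieces are in place, the theorem follows by combining the correctness of quantum unit propagation with the linear-time implementation.
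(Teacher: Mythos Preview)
Your proposal has the right skeleton --- propagate forced values through rank-$1$ entangled constraints, branch on product constraints --- and you correctly identify amortization across branchings as the crux. But there are two genuine gaps that the paper has to work to fill, and your outline does not address either.

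First, you only ever initiate a branch at a \emph{product} constraint (``one of the at most two dangerous directions imposed by an incident product-type constraint''). What do you do when, after removing maximal-rank terms and exhausting all product-type constraints, the remaining graph consists solely of \emph{entangled} rank-$1$ projectors? There is no product edge to branch on, and for an entangled edge there are infinitely many candidate single-qubit states, not two. The paper handles this with a \emph{Sliding Lemma}: it assigns an arbitrary state (say $\ket{0}$) to some free qubit and propagates; if this fails, the contradicting cycle of entangled edges can be ``slid'' down to a pair of distinct rank-$1$ constraints between the two endpoints $i,j$ of the cycle, and any $2$-dimensional subspace of $\mathbb{C}^2\otimes\mathbb{C}^2$ contains a product vector, so one manufactures a product constraint $\ket{\alpha^\perp}_i\otimes\ket{\beta^\perp}_j$ that every ground state must satisfy. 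Only then is there a legitimate two-way branch. Without this mechanism your algorithm simply has no move in the all-entangled phase.

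Second, your backtracking scheme (``a contradiction triggers backtracking to the last branch point and trying the other option'') does not give linear time, and your proposed amortization argument only covers the \emph{successful} propagation rounds, not the aborted ones. If the first branch explores $k$ edges before contradicting and is then undone, those $k$ steps are not paid for by any edge that gets permanently removed. The paper's fix is to run the two branches \emph{in parallel}, interleaving steps, and to stop both as soon as one terminates successfully; then the wasted work on the abandoned branch is at most the work on the kept branch, which \emph{is} charged to permanently removed edges. This parallel-propagation trick (essentially Even--Itai--Shamir) is what makes the telescoping sum $\sum_t(|E_{s_t}|-|E_{s_{t+1}}|)=O(m)$ go through; naive sequential backtracking does not.
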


Our algorithm shares the same trial and error approach of the Davis-Putnam procedure for classical 2SAT, but handles many difficulties arising in the quantum setting. First, a ground state of Q2SAT input may be entangled, some distinctive feature that classical 2SAT does not have. Thus the idea of setting some qubit to certain state and propagating from there does not have foundation at the first place. Indeed, if a rank-3 projection leaves the only allowed state entangled, then any ground state is entangled in those two qubits. We overcome this by showing a \textit{product-state theorem}, which asserts that for any frustration-free Q2SAT instance $H$ that contains only rank-1 and rank-2 projectors, there always exists a ground state in the form of a tensor product of \textit{single-qubit} states. 

This structural theorem grants us the following approach: We try some candidate solution $\ket{\psi}_i$ on a qubit $i$, and propagate this along the graph. If no contradiction is found, it turns out that we can detach the explored part and recurse on the rest of the graph. If a contradiction is found, then we can identify two candidates $(i,\ket{\psi}_i)$ and $(j,\ket{\phi}_j)$ such that either assigning $\ket{\psi}_i$ to qubit $i$ or assigning $\ket{\phi}_j$ to qubit $j$ is correct, if there exists a solution at all. More details follow next.

To illustrate the main idea of our algorithm, let us suppose that
the input contains only projectors of rank at most two. Such a system
can be further simplified to a system consisting only of rank-1
projectors, by writing every rank-2 projector as a sum of two rank-1
projectors. Consider, for example, qubits 1 and 2 and a rank-1
projector $\Pi_{12} = \ket{\psi}\bra{\psi}$ over these qubits. The
product-state theorem implies that it suffices to search for a product
ground state. Thus on the first two qubits, we are looking for
states $\ket{\alpha},\ket{\beta}$ such that
$(\bra{\alpha}\otimes\bra{\beta})\Pi_{12}
(\ket{\alpha}\otimes\ket{\beta})=0$, which is equivalent to
$\bra{\alpha}\otimes\bra{\beta}\cdot\ket{\psi} = 0$. In other words,
we look for a product state $\ket{\alpha}\otimes\ket{\beta}$ that is
perpendicular to $\ket{\psi}$. Assume that we have assigned qubit 1
with the state $\ket{\alpha}$ and we are looking for a state
$\ket{\beta}$ for qubit 2. The crucial point, which enables us to
solve Q2SAT efficiently, is that just like in the classical case,
there are only two possibilities: (i) for any $\ket{\beta}$, the
state $\ket{\alpha}\otimes\ket{\beta}$ is perpendicular to
$\ket{\psi}$, or (ii) there is only one state $\ket{\beta}$ (up to
an overall complex phase), for which
$(\bra{\alpha}\otimes\bra{\beta})\cdot\ket{\psi}=0$. The first case
happens if and only if $\ket{\psi}$ is by itself a product state of
the form $\ket{\psi}=\ket{\alpha^\perp}\otimes\ket{\xi}$, where
$\ket{\alpha^\perp}$ is perpendicular to $\ket{\alpha}$ and
$\ket{\xi}$ is arbitrary. If the second case happens, we say that
state $\ket{\alpha}$ is propagated to state $\ket{\beta}$ by the
constraint state $\ket{\psi}$.

The above dichotomy enables us to propagate a product state
$\ket{s}$ on part of the system until we either reach a
contradiction, or find that no further propagation is possible and
we are left with a smaller Hamiltonian $H_s$. This smaller
Hamiltonian consists of a subset of the original projectors,
\emph{without introducing new projectors}. It turns out that once an
edge is checked for potential propagation, then no matter whether a
propagation happens along the edge, the edge can be safely removed
without changing the satisfiability. Thus the satisfiability of the
original Hamiltonian $H$ is the same as that of the smaller
Hamiltonian $H_s$.

We still need to specify how the state $\ket{\alpha}$ is chosen to
initialize the propagation. An idea is to begin with projectors
$\ket{\psi}\bra{\psi}$ for which $\ket{\psi}$ is a product state
$\ket{\alpha}\otimes\ket{\beta}$. In such cases a product state
solution must either have $\ket{\alpha^\perp}$ at the first qubit or
$\ket{\beta^\perp}$ at the second. To maintain a linear running
time, we propagate these two choices simultaneously until one of the
propagations stops without contradiction, in which case the
corresponding qubit assignment is made final. If both propagations
end with contradiction, the input is rejected.

The more interesting case of the algorithm happens when we have only
entangled rank-1 projectors. What should our initial state be then?
We make an arbitrary assignment (say, $\ket{0}$) to any of the still
unassigned qubits and propagate this choice. If the propagation ends
without contradiction, we recurse. If a contradiction is found then
we confront a challenging problem. In the classical case we could
reverse our choice, say $x_0 = 0$, and try the other possibility,
$x_i = 1$. But in the quantum case we have an infinite number of
potential assignment choices.  The solution is found by the
following observation: Whenever a contradiction is reached, it can
be attributed to a cycle of entangled projectors in which the
assignment has propagated from qubit $i$ along the cycle and
returned to it with another value. Then using the techniques of
`sliding', which was introduced in Ref.~\cite{JWZ11}, one can show
that this cycle is equivalent to a system of one double edge and a
`tail' (see Fig.~\ref{fig:contradicting-cycle}). Using a simple
structure lemma, we are guaranteed that at least one of the
projectors of the double edge can be turned into a product state
projector, which, as in the previous stage, gives us two possible
free choices.

\begin{figure}
  \begin{center}
    \includegraphics[scale=1.0]{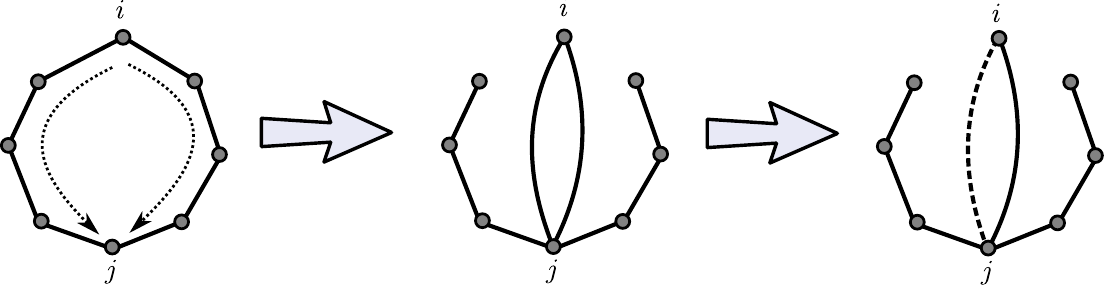}  
  \end{center}
  \caption{\label{fig:contradicting-cycle} Handling a contradicting
  cycle: we  slide edges that touch $i$ along the two paths to $j$
  until we get
  a double edge with a `tail'. We then use a structure lemma to
  deduce that at least one of these edges can be written as a
  product projector.}
\end{figure}


Let us state here that our algorithm works in the algebraic model of
computation: we suppose that every arithmetic operation on complex
numbers can be done in unit time. There are several ways to work in
a more realistic model.  Bravyi~\cite{Bra06} suggests considering
bounded degree algebraic numbers, in which case the length of the
representations and the cost of the operations should be analyzed.
Another possibility would be to consider complex numbers with
bounded precision in which case exact computation is no more
possible and therefore an error analysis should be made. By choosing
a more convenient computational model, we disregard these issues.

Classically, Davis-Putnam \cite{DP60} and DPLL algorithms
\cite{DPLL62} are widely-used heuristics, forming the basis of
today's most efficient solvers for general SAT. For quantum $k$-SAT,
it could also be a good heuristic if we try to find product-state
solutions, and in that respect our algorithm makes the first-step
exploration. 

Simultaneously and independently from our work and approximately at the same time, de Beaudrap and
Gharibian~\cite{BG} have also presented a linear time algorithm for quantum 2SAT.
The main difference between the two algorithms is how they deal with instances
with only entangled rank-1 projectors. Contrarily to us,~\cite{BG} handles these instances
by using transfer matrix techniques to find discretizing cycles~\cite{ref:Laumann09}.



\section{Preliminaries}

\subsection{Notation}

We will use the notation $[n]= \{1, \ldots,n\}$.  For a graph $G=
(V,E)$, and for a subset $U \subseteq V$ of the vertices, we denote
by $G(U)$ the subgraph induced by $U$. Our Hilbert space is defined
over $n$ qubits, and is written as
$\mathcal{H}=\Qb_1\otimes\Qb_2\otimes\cdots\otimes\Qb_n$, where
$\Qb_i$ is the two-dimensional Hilbert space of the $i^{th}$ qubit.
We shall often write $\ket{\alpha}_i$ to emphasize that the 1-qubit
state $\ket{\alpha}$ lives in $\Qb_i$. Similarly, $\ket{\psi}_{ij}$
denotes a 2-qubit state that lives in $\Qb_i\otimes\Qb_j$. For a
1-qubit state $\ket{\alpha} = \alpha_0\ket{0} + \alpha_1\ket{1}$, we
define its \emph{perpendicular state} as
$\ket{\alpha^\bot}\defeq\alpha_1\ket{0} - \alpha_0\ket{1}$. 

We shall denote local projectors either by $\Pi_{ij}$, or by
$\Pi_e$, where $e=(i,j)$. When $i<j$, $\Pi_{ij}$ is a 2-local
projector on the qubits $i,j$; it can be written as $\Pi_{ij} =
\hat{\Pi}_{ij}\otimes \Id_{rest}$, where $\hat{\Pi}_{ij}$ is a
projector working on $\Qb_i\otimes \Qb_j$ and $\Id_{rest}$ is the
identity operator on the rest of the system. Similarly, when $i=j$,
$\Pi_{ii}=\hat{\Pi}_{ii}\otimes\Id_{rest}$, where $\hat{\Pi}_{ii}$
is a projector defined in $\Qb_i$.  Often, in order not to overload
the notation, we shall use $\Pi_{ij}$ instead of $\hat{\Pi}_{ij}$,
even when acting on states in $\Qb_i\otimes\Qb_j$. Similarly, with a
slight abuse the notation, we define the \emph{rank} of a projector
$\Pi_e$ {to be} the dimension of the subspace that its local
projector $\hat{\Pi}_e$ projects to, and it will be denoted by
$\rank(\Pi_e)$. We call a rank-1 projector $\Pi_e =
\ket{\psi}\bra{\psi}$, \textit{entangled}
if $\ket{\psi}$ is an entangled state, and \textit{product} if
$\ket{\psi}$ is a product state.

\subsection{The Q2SAT problem}

A quantum 2-SAT Hamiltonian on an $n$-qubit system is a Hermitian
operator $H = \sum_{e \in I} \Pi_e$, for some $I \subseteq \{(i,j)
\in [n] \times [n] :1 \leq i \leq j \leq n\}$.  We suppose that
$\rank(\Pi_{ii}) =1$, for all $(i,i) \in I$, and $0 <
\rank(\Pi_{ij}) < 4$, for all $(i,j) \in I$ when $i<j$. The
single-qubit projectors of $H$ as well as its 2-qubit projectors of
rank-3 are called \emph{maximal} rank.

The \textit{ground energy} of a Hamiltonian $H = \sum_{e \in I}
\Pi_{e}$ is its smallest eigenvalue, and a \textit{ground state} of
$H$ is an eigenvector corresponding to the smallest eigenvalue. The
subspace of the ground states is called the \textit{ground space}. A
Hamiltonian is \textit{frustration-free} if it has a ground state
that is also simultaneously the ground state of all local terms. As
explained in the introduction, if the Hamiltonian is made of local
projectors, it is frustration-free if and only if there is a state
that is a mutual zero eigenstate of all projectors, which happens if
and only if the ground energy is 0. Therefore, if $\ket{\gs}$ is a
ground state of a frustration-free quantum 2-SAT Hamiltonian,
$\Pi_{e}\ket{\gs} = 0$ for all $e \in I$. 
We can also view each local projector as a {\em constraint} on at most two qubits, 
then a ground state satisfies every constraint.

It turns out that for the representation of the 2QSAT Hamiltonian, 
it will be helpful to eliminate the rank-2 projectors by 
decomposing each one of them into a sum of two rank-1 projectors.
For every $(i,j) \in I$ such that $\rank(\Pi_{ij}) = 2$, let
$\Pi_{ij} = \Pi_{ij,1} + \Pi_{ij,2}$, where $\Pi_{ij,1}$ and $
\Pi_{ij,2}$ are rank-1 projectors. Such projectors can be found in
constant time. We therefore suppose without loss of generality that
$H$ is specified by
\begin{align*}
  H = \sum_{ \rank(\Pi_{ij}) \neq 2} \Pi_{ij}
    + \sum_{ \rank(\Pi_{ij}) = 2} (\Pi_{ij,1} + \Pi_{ij,2}) \,,
\end{align*}
which we call the \emph{rank-$1$ decomposition} of $H$.

To the rank-1 decomposition we associate a weighted, directed
multigraph with self-loops $G(H) = (V,E,w)$, the \emph{constraint
graph} of $H$. By definition $V =\{i \in [n] : \exists j \in [n]
{\rm ~ such ~ that ~}(i,j) \in I {\rm ~or~} (j,i) \in I\}$, For
every rank-3 and rank-1 projector acting on two qubits, there is an
edge in each direction between the two nodes representing
them. For every projector acting on a single qubit, there is a
self-loop. Finally, for every rank-2 projector, there are two
parallel edges in each direction between nodes representing its
qubits. Because of the parallel edges, $E$ is not a subset of $V
\times V$. Formally, $E = E_1 \cup E_2$ where
\begin{align*}
  E_1 =  \{ (i,j) \in [n] \times [n] : (i,j) \in I 
    {\rm ~and~} \rank(\Pi_{ij}) \in \{1,3\},  
    \mbox{{\rm ~or~}} (j,i) \in I {\rm ~and~}
  \rank(\Pi_{ji}) \in \{1,3\} \},
\end{align*}
and
\begin{align*}
  E_2 = \{ (i,j,b) \in [n] \times [n] \times [2] : 
    (i,j) \in I {\rm ~and~} \rank(\Pi_{ij}) =2, 
    \mbox{{\rm ~or~}}  (j,i) \in I 
    {\rm ~and~} \rank(\Pi_{ji}) =2 \}.
\end{align*}
We say that an edge $e \in E$ \emph{goes from} $i$ \emph{to} $j$ if
$e \in \{ (i,j), (i,j,1), (i,j,2) \}$. For a projector $\Pi$ acting
on two qubits, we define its \emph{reverse} projector {$\Pi^{rev}$}
by $\Pi^{rev}\ket{\alpha}\ket{\beta} = \Pi\ket{\beta}\ket{\alpha}$,
and for $i \leq j$ and $b \in [2]$, we set $\Pi_{ji}=\Pi_{ij}^{rev}$
and $\Pi_{jib}=\Pi_{ijb}^{rev}$. Then for an edge $(i,j)$, its
\emph{weight} is defined as $w(i,j) = \Pi_{ij}$, and analogously for
an edge $(i,j,b)$, we set $w(i,j,b) = \Pi_{ijb}$. 
\begin{center}
\begin{figure}[htb]
	\begin{minipage}[b]{0.35\textwidth}
    \def\svgwidth{180pt}
  	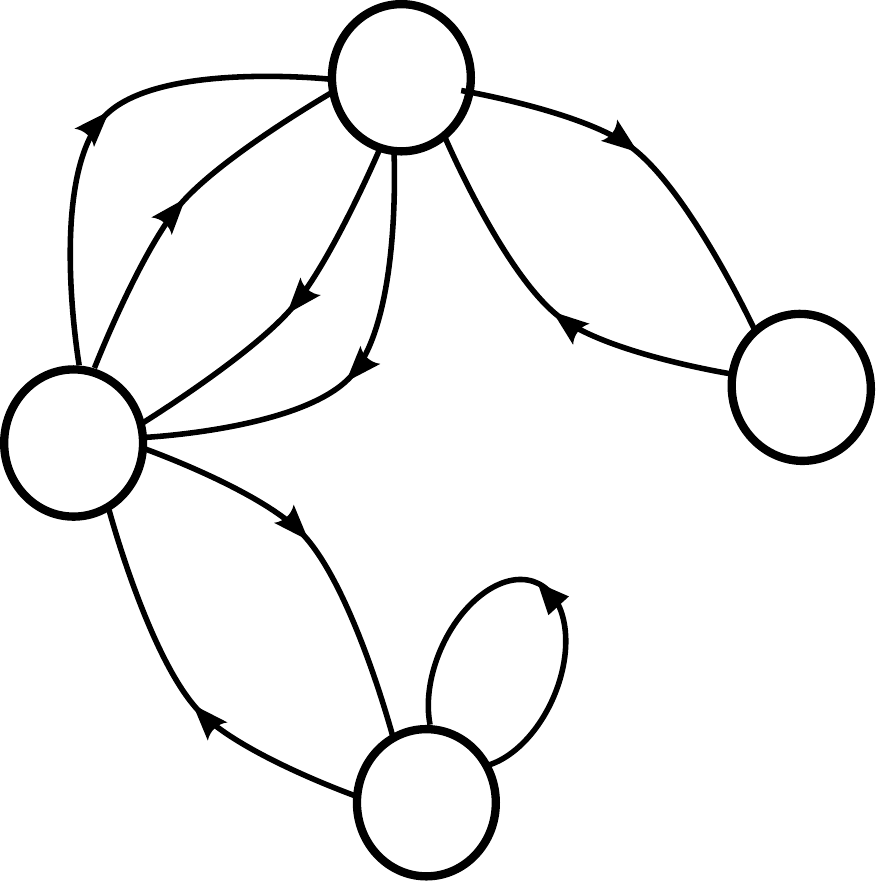
  	\caption*{(a)}
 	\end{minipage}
 	\begin{minipage}[b]{0.65\textwidth}
  	\centering
  	\def\svgwidth{250pt}
  	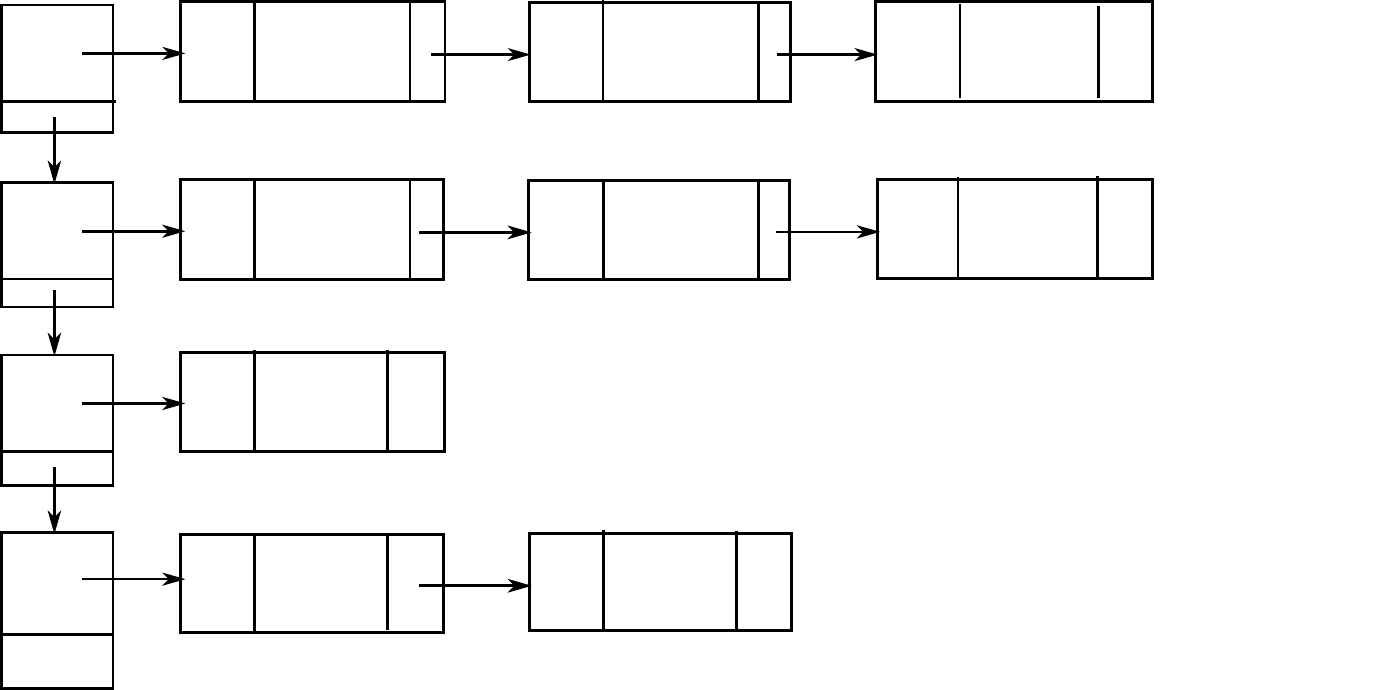
  	\caption*{(b)}
	\end{minipage}
\end{figure}
\captionof{figure}{(a)The constraint graph for Hamiltonian $H = \Pi_{12} + \Pi_{14} + \Pi_{23} + \Pi_{44}$ where $\rank(\Pi_{44}) = \rank(\Pi_{23}) = 1, \rank(\Pi_{12}) = 2$ and  $\rank(\Pi_{14}) = 3$ using its rank-1 decomposition (b) The adjacency list representation for the constraint graph $G(H)$}
\label{fig:constraint_graph}
\end{center}

We will suppose that the input to our problem is the constraint
graph $G(H)$ of the Hamiltonian, given in the standard adjacency
list representation of weighted graphs, naturally modified for
dealing with the parallel edges as shown in Figure~\ref{fig:constraint_graph}. 
In this representation there is a
linked list of size at most $n$ containing one element for each
vertex, and the element $i$ in this list is also pointing towards a
linked list containing an element for every edge $(i,j)$ or
$(i,j,b)$. For an edge $(i,j)$, this element contains $j$, the
projector $\Pi_{ij}$ and a pointer towards the next element in the
list, for an edge $(i,j,b)$ it also contains the value $b$. The
problem Q2SAT is defined formally as follows.

\vbox{\begin{quote}
Q2SAT\\
\textbf{Input:} The constraint graph $G(H)$ of a 2-local Hamiltonian
  $H$, given in the adjacency list representation.

\textbf{Output:} A solution if $H$ is frustration free, ``$H$ is
  unsatisfiable'' if it is not.
\end{quote}
}

\subsection{Simple ground states}

Our algorithm is based crucially on the following \textit{product
state theorem}, which says that any frustration-free Q2SAT
Hamiltonian has a ground state which is a product state of
single qubit and two-qubit states, where the latter only appear in
the support of rank-3 projectors. A slightly weaker claim of that
form has already appeared in Theorem~2 of \Ref{CCD+11}. The
difference here is that we specifically attribute the 2-qubits
states in the product state to rank-3 projectors. Just as in
\Ref{CCD+11}, our derivation begins with Theorem~1 of \Ref{CCD+11},
which we give below. It relies on the notion of a \textit{genuinely
entangled} state in an $n$-qubit system, which is a pure state that
is not a product state with respect to any bi-partition of the
system. Then Theorem~1 in~\cite{CCD+11} states
\begin{Prop}\label{thm:genuinely-ent}
  A $2$-local frustration-free Hamiltonian on $n$ qubits which has a
  genuinely entangled ground state always has a product ground
  state, whenever $n \geq 3.$
\end{Prop}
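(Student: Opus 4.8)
The plan is to prove the proposition by induction on $n$, after first normalizing $H$. Passing to the rank‑$1$ decomposition, write $H=\sum_e\Pi_e$ with each $\Pi_e=\ket{\psi_e}\bra{\psi_e}$ of rank one, acting on one or two qubits. I would first discard all maximal‑rank terms: if $\hat\Pi_e$ is a single‑qubit projector or a rank‑$3$ two‑qubit projector, then $\ker\hat\Pi_e$ is one‑dimensional, say $\mathbb{C}\ket{\xi}$, so frustration‑freeness forces $\ket{\gs}=\ket{\xi}\otimes\ket{\chi}$ with $\ket{\chi}$ supported on the remaining qubits (at least one, since $n\geq3$); this makes $\ket{\gs}$ a product state across the bipartition determined by the support of $e$, contradicting genuine entanglement. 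Hence under the hypothesis $H$ is a sum of rank‑$1$ projectors on pairs of distinct qubits, and a ``product ground state'' is just a tensor product of single‑qubit states. I would also record the standard propagation dichotomy for an edge $(i,j)$ with projector $\ket{\psi_{ij}}\bra{\psi_{ij}}$: if $\ket{\psi_{ij}}$ is entangled then each $\ket{\alpha}_i$ has a unique partner $\ket{\beta}_j$ (up to phase) with $\ket{\alpha}_i\ket{\beta}_j\perp\ket{\psi_{ij}}$, and $\ket{\alpha}_i\mapsto\ket{\beta}_j$ is an invertible projective‑linear map of $\Qb_i$; if $\ket{\psi_{ij}}=\ket{a}_i\ket{b}_j$ is product, the orthogonality holds precisely when $\ket{\alpha}_i=\ket{a^\bot}$ (then $\ket{\beta}_j$ is arbitrary) or $\ket{\beta}_j=\ket{b^\bot}$.

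For the inductive step ($n-1\to n$, with $n\geq4$) I would pick a qubit, say $n$, and write $\ket{\gs}=\ket{0}_n\ket{\phi_0}+\ket{1}_n\ket{\phi_1}$; entanglement of $\ket{\gs}$ across $\{n\}\,|\,[n-1]$ makes $\ket{\phi_0},\ket{\phi_1}$ linearly independent, so $\mathcal{L}\defeq\spa\{\ket{\phi_0},\ket{\phi_1}\}$ is two‑dimensional. Let $H'=\sum_{e\not\ni n}\Pi_e$ be the induced Hamiltonian on $[n-1]$, with ground space $V'$. Then $\mathcal{L}\subseteq V'$, since $\Pi_e\ket{\phi_0}=(\bra0_n\otimes\Id)\Pi_e\ket{\gs}=0$ for every $e\not\ni n$, and likewise for $\ket{\phi_1}$. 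Moreover, if no vector of $\mathcal{L}$ were genuinely entangled, then $\mathbb{P}(\mathcal{L})$ being irreducible and covered by the finitely many closed loci ``product across a given bipartition'', all of $\mathcal{L}$ would be product across one fixed bipartition $X\,|\,Y$ of $[n-1]$; such a space must have a common $X$‑factor or a common $Y$‑factor (otherwise $\ket{\phi_0}+\lambda\ket{\phi_1}$ has Schmidt rank $2$ for generic $\lambda$), making $\ket{\gs}=\ket{u}_X\otimes(\ket0_n\ket{v_0}_Y+\ket1_n\ket{v_1}_Y)$ product across $X\,|\,(Y\cup\{n\})$ — impossible. Hence a generic vector of $\mathcal{L}$, that is $(\bra{\alpha}_n\otimes\Id)\ket{\gs}$ for generic $\ket{\alpha}_n$, is a genuinely entangled ground state of the $2$‑local frustration‑free Hamiltonian $H'$ on $n-1\geq3$ qubits, and the induction hypothesis produces a product ground state $\bigotimes_{i<n}\ket{\mu_i}$ of $H'$.

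The base case $n=3$ I would treat directly: after the normalization $G(H)$ is supported on the triangle $\{1,2,3\}$ (possibly with parallel edges coming from rank‑$2$ terms). When all projectors are entangled, compose the propagation maps around the triangle into a single projective‑linear map of $\Qb_1$, take a fixed point $\ket{\mu_1}$ (which exists over $\mathbb{C}$), and propagate it to $\ket{\mu_2},\ket{\mu_3}$; the resulting product state is a common zero eigenvector. A product projector $\ket{a}_i\ket{b}_j$ is handled by splitting into the two branches ``$\ket{\mu_i}=\ket{a^\bot}$'' and ``$\ket{\mu_j}=\ket{b^\bot}$'', each of which fixes a qubit and leaves a path of propagations that always admits a product solution; one then checks — a finite but somewhat fiddly case analysis — that the configurations in which every branch dead‑ends are exactly those having no genuinely entangled ground state, so the hypothesis excludes them.

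The step I expect to be the genuine obstacle is the \emph{reassembly} inside the inductive step: upgrading the product ground state $\bigotimes_{i<n}\ket{\mu_i}$ of $H'$ to one for $H$ by choosing $\ket{\mu_n}$ with $\ket{\mu_j}\ket{\mu_n}\perp\ket{\psi_{jn}}$ for every edge $(j,n)$. Since an entangled edge at $n$ pins $\ket{\mu_n}$ as an invertible function of $\ket{\mu_j}$, the neighbours of $n$ must carry mutually compatible states, which a product ground state of $H'$ chosen arbitrarily by the induction hypothesis need not. The mechanism I would use is the same as in the base case: a consistent assignment is forced by running propagation around cycles through $n$ — composing the invertible maps along them and taking a complex fixed point — with product edges again handled by branching, and the role of the genuinely entangled $\ket{\gs}$ is exactly to certify that some branch survives and that the relevant compositions have the fixed points we need. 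Making this certification precise — i.e.\ showing that the existence of $\ket{\gs}$ rules out every combination of propagation failures that would block all branches, by tracking the constraints the reduced states of $\ket{\gs}$ must satisfy along each path — is where the real work lies; I would most likely resolve it by carrying a strengthened induction hypothesis describing which boundary assignments product ground states realize on the neighbourhood of a vertex, or by folding the neighbours of $n$ into the recursion before invoking it. The remaining steps — frustration‑freeness of $H'$, the finitely many genericity conditions, and assembling the pieces — are routine.
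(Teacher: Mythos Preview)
The paper does not prove this proposition: it is quoted as Theorem~1 of \cite{CCD+11} and invoked as a black box in the derivation of Theorem~\ref{thm:productstate}. There is therefore no in-paper argument to compare yours against.

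On its own terms your outline is sound up to the point you yourself flag. The normalization step (a maximal-rank term forces a tensor factor of $\ket{\gs}$, contradicting genuine entanglement when $n\geq 3$) is correct. The argument that $\mathcal{L}=\spa\{(\bra{0}_n\otimes\Id)\ket{\gs},\,(\bra{1}_n\otimes\Id)\ket{\gs}\}$ lies in the ground space of $H'$ and, for $n\geq 4$, contains a genuinely entangled vector is clean and correct.

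The genuine gap is exactly the reassembly step, and you are right that this is where the content lies. Induction hands you \emph{some} product ground state $\bigotimes_{i<n}\ket{\mu_i}$ of $H'$, with no tie to the edges incident to $n$; your propagation/fixed-point heuristic only produces a consistent $\ket{\mu_n}$ if the states $\ket{\mu_j}$ for $j$ adjacent to $n$ already lie on a common orbit of the edge maps, which an arbitrary product ground state of $H'$ need not do. Both proposed remedies (``strengthen the hypothesis'', ``fold the neighbours in'') are directions rather than statements, so the inductive step is not closed. The base case is in the same condition: for $n=3$ you assert that the branchings which all dead-end are precisely those admitting no genuinely entangled ground state, but that equivalence \emph{is} the proposition at $n=3$ and is not argued; moreover the ``compose around the triangle and take a fixed point'' step tacitly assumes one edge per side, whereas rank-$2$ terms give parallel edges whose separate fixed-point conditions must still be reconciled. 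In short, the proposal correctly locates the difficulty but does not resolve it.
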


We will also need the following {simple fact} about
2-dimensional subspaces in $\mbC^2\otimes \mbC^2$
\begin{Fact}\label{lem:rank2product}
  Any $2$-dimensional subspace $V$ of the $2$-qubit space
  $\mbC^2\otimes \mbC^2$ contains at least one product state, which
  can be found in constant time.
\end{Fact}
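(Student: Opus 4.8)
The plan is to use the standard correspondence between vectors of $\mbC^2\otimes\mbC^2$ and $2\times 2$ complex matrices: the state $\ket{\psi}=\sum_{i,j\in\{0,1\}}M_{ij}\,\ket{i}\ket{j}$ is identified with the matrix $M=(M_{ij})$, and under this identification a nonzero $\ket{\psi}$ is a product state if and only if $\rank(M)=1$, i.e. $\det M = 0$. Thus the $2$-dimensional subspace $V$ corresponds to a $2$-dimensional space $\mathcal{V}$ of $2\times 2$ matrices, and it suffices to show that $\mathcal{V}$ contains a nonzero matrix of zero determinant.

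To do this, fix a basis $\{A,B\}$ of $\mathcal{V}$, which can be read off in constant time since the ambient matrix space is $4$-dimensional. If $\det A = 0$ we are done (output $A$), and likewise if $\det B = 0$. So assume both $A$ and $B$ are invertible, and consider the polynomial $p(\lambda) = \det(A + \lambda B)$. Expanding the $2\times 2$ determinant shows $p(\lambda) = (\det B)\,\lambda^2 + c\,\lambda + \det A$ for some $c \in \mbC$; since $\det B \neq 0$, $p$ has degree exactly $2$, and because $\mbC$ is algebraically closed it has a root $\lambda_0 \in \mbC$. Then $M := A + \lambda_0 B$ satisfies $\det M = 0$, and $M \neq 0$ because $A$ and $B$ are linearly independent; hence $M$ represents a product state lying in $V$. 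Equivalently, one may argue that $B^{-1}A$ has an eigenvalue $\mu$ and that $A - \mu B$ is then nonzero and singular.

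All of this is $O(1)$ work in the algebraic model of computation: extracting $\{A,B\}$ from the description of $V$, checking the two determinants, computing the coefficient $c$, and solving a single quadratic equation via the quadratic formula each take constant time. I do not anticipate a genuine obstacle here; the only point requiring care is to guarantee that the matrix returned is nonzero, which is precisely why we first dispatch the easy cases $\det A = 0$ or $\det B = 0$ and only afterwards form the quadratic, so that $A + \lambda_0 B \neq 0$ follows from linear independence of the basis.
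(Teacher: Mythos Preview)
Your proof is correct. The paper takes a slightly different route: instead of working with a basis of $V$ itself, it picks a basis $\{\ket{\psi},\ket{\phi}\}$ of the orthogonal complement $V^\perp$ and searches for a product state $\ket{\alpha}\otimes\ket{\beta}$ orthogonal to both. Writing $\Psi,\Phi$ for the $2\times 2$ matrices with entries $\psi^*_{ij},\phi^*_{ij}$, the orthogonality conditions become $\underline{\alpha}^T\Psi\underline{\beta}=\underline{\alpha}^T\Phi\underline{\beta}=0$; the paper then takes $\underline{\beta}$ to be an eigenvector of $\Phi^{-1}\Psi$ (or a null vector of $\Phi$ if $\Phi$ is singular) and solves for $\underline{\alpha}$. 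Your argument is more direct: you stay inside $V$, translate ``product state'' to ``$\det M=0$'', and intersect the pencil $A+\lambda B$ with the determinantal locus via a single quadratic. The eigenvector variant you mention at the end ($B^{-1}A$ has an eigenvalue $\mu$, hence $A-\mu B$ is singular) is exactly the same generalized-eigenvalue trick the paper uses, just applied to a basis of $V$ rather than of $V^\perp$. Both take $O(1)$ time in the algebraic model; your formulation is arguably cleaner, while the paper's choice of $V^\perp$ is natural when the subspace is handed to the algorithm as the kernel of a rank-$2$ projector.
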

\begin{proof}
  Take a basis $\{\ket{\psi}, \ket{\phi}\}$ of the two-dimensional
  subspace $V^\bot$, the orthogonal complement of $V$. Our goal is
  to find a product state $\ket{\alpha}\otimes\ket{\beta}$ such that
  $\bra{\psi}(\ket{\alpha}\otimes\ket{\beta}) =
  \bra{\phi}(\ket{\alpha}\otimes\ket{\beta}) = 0$. To that aim,
  expand in the standard basis:
  $\ket{\psi}=\sum_{ij}\psi_{ij}\ket{ij}$,
  $\ket{\phi}=\sum_{ij}\phi_{ij}\ket{ij}$, and
  $\ket{\alpha}=\sum_i\alpha_i\ket{i}$,
  $\ket{\beta}=\sum_i\beta_i\ket{i}$. Then our task is to find
  coefficients $\alpha_i$ and $\beta_i$ such that $\sum_{ij}
  \phi^*_{ij}\cdot\alpha_i\beta_j =0$ and $\sum_{ij}\psi^*_{ij}
  \cdot\alpha_i\beta_j = 0$. We can pass to a matrix notation, in
  which $\psi^*_{ij}, \phi^*_{ij}$ are the entries of $2\times 2$
  matrices $\Psi, \Phi$, and $\alpha_i, \beta_i$ are the
  coordinates of the 2-vectors $\underline{\alpha},
  \underline{\beta}$.
  In that notation, we are looking for vectors $\underline{\alpha},
  \underline{\beta}$ such that
  \begin{align}
  \label{eq:2by2-cond}
    \underline{\alpha}^T\Psi \underline{\beta}
      = \underline{\alpha}^T\Phi \underline{\beta} = 0 \,.
  \end{align}
  If the matrix $\Phi$ is singular, we pick $\underline{\beta}$
  inside its the null space, and choose $\underline{\alpha}$ such
  that $\underline{\alpha}^T\Psi \underline{\beta}=0$. Otherwise,
  when $\Phi$ is non-singular, we let $\underline{\beta}$ be an
  eigenvector of the matrix $\Phi^{-1}\Psi$, i.e.,
  $\Phi^{-1}\Psi\underline{\beta}=c\underline{\beta}$, where $c$ is
  some eigenvalue. Then
  $\Psi\underline{\beta}=c\Phi\underline{\beta}$, and therefore to
  satisfy \Eq{eq:2by2-cond}, we can choose $\underline{\alpha}$ such
  that $\underline{\alpha}^T\Phi\underline{\beta}=0$.
\end{proof}

Our product state theorem is stated as follows.
\begin{Thm}\label{thm:productstate}
  Any frustration-free 	${\rm Q2SAT}$ Hamiltonian has a ground state which is
  a tensor product of one qubit and two-qubit states, where
  two-qubit states only appear in the support of rank-$3$ projectors.
\end{Thm}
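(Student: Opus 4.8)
The plan is to induct on the number of qubits $n$, using Proposition~\ref{thm:genuinely-ent} as the base mechanism for removing entanglement, and Fact~\ref{lem:rank2product} to control the one irreducible obstruction, namely rank-$3$ projectors whose complementary line is genuinely entangled. The key observation is this: take any frustration-free Q2SAT Hamiltonian $H$ and pick a ground state $\ket{\gs}$. If $\ket{\gs}$ is a product state across \emph{some} bipartition of the qubits, then $H$ decomposes (after this cut, each projector lives entirely on one side or the other, since it touches at most two qubits — a projector straddling the cut would force entanglement across it, unless it is itself a product projector aligned with the cut, which we can also handle) into a sum $H_A + H_B$ of frustration-free Hamiltonians on fewer qubits, and we recurse on each. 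So the only case that needs real work is when \emph{every} ground state of $H$ is genuinely entangled — but then Proposition~\ref{thm:genuinely-ent} (for $n\ge 3$) says there is a \emph{product} ground state, contradiction. Hence for $n \ge 3$ we may always assume a ground state that factors across a nontrivial bipartition, which drives the induction down.

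More carefully, I would set up the induction so that the invariant being proved is exactly the statement of Theorem~\ref{thm:productstate}: a ground state that is a tensor product of single-qubit states and two-qubit states, the latter supported only on (the two qubits of) rank-$3$ projectors. For $n \le 2$: if $n=1$ the claim is trivial; if $n=2$ and there is a rank-$3$ projector, its $1$-dimensional kernel \emph{is} the (unique) ground state and it is allowed to be a two-qubit entangled state by the statement; if instead all projectors on the two qubits have rank $\le 2$, their common kernel has dimension $\ge 2$ (a single rank-$\le 2$ projector; if there were two independent constraints summing to rank $\ge 3$ we are in the previous case or unsatisfiable), and Fact~\ref{lem:rank2product} hands us a product ground state. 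For the inductive step with $n \ge 3$: by the dichotomy above, either some ground state $\ket{\gs}$ is a product state across a bipartition $(A,B)$, in which case I argue every projector of $H$ acts within $A$ or within $B$ (any projector with one qubit in each of $A,B$ must annihilate the product state $\ket{\gs}_A\otimes\ket{\gs}_B$; restricting it to this structure, I claim it can be absorbed into a one-local constraint on one side — I should check that a two-local projector annihilating a fixed product state, together with the requirement that the global ground state remain a product across the cut, contributes nothing new beyond a single-qubit constraint, or else the instance is handled directly), so $H = H_A + H_B$ with both parts frustration-free on $<n$ qubits, and the inductive hypothesis applied to each yields the desired product ground state for $H$; or no ground state is a product across any bipartition, i.e. every ground state is genuinely entangled, and then Proposition~\ref{thm:genuinely-ent} supplies a product ground state — again contradiction, so this case is vacuous.

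The main obstacle is the bookkeeping around projectors that straddle the chosen bipartition $(A,B)$. A priori, a two-local projector $\Pi_{ij}$ with $i \in A$, $j \in B$ need not vanish just because it annihilates the particular product state $\ket{\gs}$; it only needs $\Pi_{ij}(\ket{\gs_A}\otimes\ket{\gs_B}) = 0$. I need to show that after conditioning on the product structure, such a projector either imposes only a single-qubit constraint on $i$ or on $j$ (so it can be pushed into $H_A$ or $H_B$), or that whenever it imposes a genuinely two-local constraint, the ground space in fact still contains a product state of the required type — most cleanly by re-choosing the bipartition, or by noting that a rank-$1$ straddling projector $\ket{\mu}\bra{\mu}$ with $\ket{\mu}$ a product state $\ket{\mu}_i \otimes \ket{\mu}_j$ forces $\ket{\mu_i^\perp}$ on qubit $i$ or $\ket{\mu_j^\perp}$ on qubit $j$, each of which \emph{is} a single-qubit constraint. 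Rank-$2$ straddling projectors decompose into two rank-$1$ ones and are handled the same way; a rank-$3$ straddling projector would force its two qubits into a fixed (possibly entangled) state — precisely the case the theorem statement carves out. I expect this case analysis to be the crux; the recursion and the invocation of the two cited results are then straightforward.
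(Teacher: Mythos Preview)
Your high-level strategy is right, and it is also the paper's: invoke Proposition~\ref{thm:genuinely-ent} to guarantee a ground state that factors across some cut, and then break up the remaining entanglement piece by piece. But the step you flag as ``the main obstacle'' is a genuine gap, and your proposed handling of it does not work. If $\ket{\gs}=\ket{\gs_A}\otimes\ket{\gs_B}$ and a projector $\Pi_e$ straddles the cut, you cannot in general absorb $\Pi_e$ into a single-qubit constraint on one side: your case analysis only treats product rank-$1$ projectors $\ket{\mu_i}\otimes\ket{\mu_j}$, and even there the constraint ``$\ket{\mu_i^\perp}$ on $i$ \emph{or} $\ket{\mu_j^\perp}$ on $j$'' is disjunctive, not a single-qubit projector you can append to $H_A$ or $H_B$. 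More fundamentally, once you recurse independently on both sides and replace $\ket{\gs_A},\ket{\gs_B}$ by new product states $\ket{\beta_A},\ket{\beta_B}$, nothing guarantees $\Pi_e(\ket{\beta_A}\otimes\ket{\beta_B})=0$; you only knew $\Pi_e$ annihilated the original pair.

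The paper avoids this not by induction on $n$ but by taking, once and for all, the \emph{finest} factorization $\ket{\gs}=\bigotimes_i\ket{\alpha^{(i)}}$ into genuinely entangled pieces on subsets $S^{(i)}$, and then replacing only those pieces with $|S^{(i)}|\ge 2$ that are not forced by a rank-$3$ projector, while leaving all singleton pieces fixed. A Schmidt-decomposition argument --- this is the ``case analysis'' you were missing --- then shows two things: a straddling $\Pi_e$ between a singleton $S^{(i)}$ and a multi-qubit $S^{(j)}$ must annihilate all of $\ket{\alpha^{(i)}}\otimes\mbC^2$, so it remains satisfied however the state on $S^{(j)}$ is rewritten; and a $\Pi_e$ straddling two multi-qubit pieces would have to annihilate four independent vectors, impossible for rank at most $2$, so that case never occurs. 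The asymmetry --- singletons stay put, only the large pieces are replaced --- is exactly what your symmetric recursion on $(A,B)$ lacks.
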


\begin{proof}
  Consider a frustration-free Q2SAT Hamiltonian $H$ and let
  $\ket{\gs}$ be its ground state. Generally, $\ket{\gs}$ can be
  written as a product state 
  \begin{align*}
    \ket{\gs} = \ket{\alpha^{(1)}}\otimes
      \ket{\alpha^{(2)}}\otimes\cdots\otimes\ket{\alpha^{(r)}} \,,
  \end{align*}
  where each $\ket{\alpha^{(i)}}$ is a genuinely entangled state
  defined on a subset $S^{(i)}$ of qubits. Notice if
  $\Pi_{ij}=\Id-\ket{\psi}\bra{\psi}_{jk}$ is a rank-3 projector 
  then necessarily \emph{every} ground state of $H$ will contain
  $\ket{\psi}_{jk}$ at a tensor product with the rest of the system.
  Therefore, if $\ket{\psi}_{jk}$ happens to be entangled, there
  would necessarily be a subset $S^{(i)}=\{j,k\}$ in the above
  decomposition with $\ket{\alpha^{(i)}}=\ket{\psi}_{jk}$. On the
  other hand, if $\ket{\psi}_{jk}$ is a product state, there would
  be two subsets $S^{(i_1)}=\{j\}$, and $S^{(i_2)}=\{k\}$.
    
  Let $H^{(i)}$ be the Hamiltonian that is the sum of all the
  projectors whose support is in $S^{(i)}$. Clearly,
  $\ket{\alpha^{(i)}}$ is a ground state of $H^{(i)}$.  By the
  reasoning in the paragraph above, it is clear that for subsets
  $S^{(i)}$ with two or more qubits, \emph{that do not correspond to
  the support of rank-3 projectors}, the corresponding $H^{(i)}$
  consists only of rank-1 and rank-2 projectors.  It is easy to see
  that Fact~\ref{lem:rank2product} implies that for such
  Hamiltonians, which do not contain rank-3 projectors,
  Proposition~\ref{thm:genuinely-ent} also for $n=2$ case.
  Therefore, any such $H^{(i)}$ also has a ground state
  $\ket{\beta^{(i)}}$ which is a product state of one qubit states:
  \begin{align}
    \ket{\beta^{(i)}} \defeq \ket{\beta^{(i)}_1}\otimes
    \ket{\beta^{(i)}_2}\otimes \cdots
  \end{align}
  The remaining $S^{(i)}$ subsets correspond either to one qubit
  subsets, or to 2-qubits subsets of entangled rank-3 projectors. In
  all these cases, we define
  $\ket{\beta^{(i)}}\defeq\ket{\alpha^{(i)}}$.

  We now claim that the state $\ket{\beta} =
  \ket{\beta^{(1)}}\otimes\cdots\otimes \ket{\beta^{(r)}}$, which is
  a product of one-qubit and two-qubit states, is a ground state of
  $H$. To prove this we need to show that this state is in the
  ground space of every projector $\Pi_e$ in $H$. If the support of
  $\Pi_e$ is inside one of the $S_i$ subsets, then by definition
  $\Pi_e\ket{\beta^{(i)}}=0$ and therefore also
  $\Pi_e\ket{\beta}=0$. Assume then that $\Pi_e$ is supported on a
  qubit from $S^{(i)}$ and a qubit from $S^{(j)}$ with $i\neq j$. We
  now consider 3 cases:
  \begin{enumerate}
    \item If both $S^{(i)}$ and $S^{(j)}$ contain only one qubit
      then $\Pi_e\ket{\beta^{(i)}}\otimes\ket{\beta^{(j)}} = 
      \Pi_e\ket{\alpha^{(i)}}\otimes\ket{\alpha^{(j)}}=0$.
	
    \item If $S^{(i)}$ is made of one qubit but $S^{(j)}$ has two or
      more qubits, then consider the Schmidt decomposition
      $\ket{\alpha^{(j)}} = \lambda_1\ket{x_1}\otimes\ket{y_1} +
      \lambda_2\ket{x_2}\otimes\ket{y_2}$. Here,
      $\ket{x_1},\ket{x_2}$ are defined on the qubit of $S_j$ that
      is in the support of $\Pi_e$, while $\ket{y_1}, \ket{y_2}$ are
      defined on the rest of the qubits in $S_j$.
      The Schmidt coefficients $\lambda_1,\lambda_2$ are 
      by
      assumption non-zero, as $\ket{\alpha_j}$ is
      entangled. Then the condition
      $\Pi_e\ket{\alpha^{(i)}}\otimes\ket{\alpha^{(j)}}=0$ is
      equivalent to $\lambda_1\ket{y_1}\otimes
      \big(\Pi_e\ket{\alpha^{(i)}}\otimes\ket{x_1}\big) +
      \lambda_2\ket{y_2}\otimes
      \big(\Pi_e\ket{\alpha^{(i)}}\otimes\ket{x_2}\big)=0$, and by
      the linear independence of $\ket{y_1},
      \ket{y_2}$, we conclude that
      $\Pi_e\ket{\alpha^{(i)}}\otimes\ket{x_1}= 
      \Pi_e\ket{\alpha^{(i)}}\otimes\ket{x_2}=0$. Therefore, $\Pi_e$
      annihilates the subspace $\ket{\alpha^{(i)}}\otimes\mbC^2$ of
      the two qubits that it acts on, and in particular it
      annihilates $\ket{\beta^{(i)}}\otimes\ket{\beta^{(j)}}$ since
      $\ket{\beta^{(i)}}=\ket{\alpha^{(i)}}$. 
        
    \item The third case in which  both $S^{(i)}$ and $S^{(j)}$ 
      contain two or more qubits cannot happen. Indeed, in such case
      we write both $\ket{\alpha^{(i)}}, \ket{\alpha^{(j)}}$ in
      their Schmidt decomposition, and from a similar argument that
      was used above, we conclude that $\Pi_e$ must annihilate 4
      independent vectors. It therefore cannot be a rank-1 or a
      rank-2 projector.
  \end{enumerate} 
  This completes the proof of the theorem.
\end{proof}

 
\subsection{Assignments}
 
Let $H = \sum_{e \in I}$ be a 2-local Hamiltonian. By
Theorem~\ref{thm:productstate}, if $H$ is frustration free then it
has a ground state which is the tensor product of 1-qubit and
2-qubit entangled states, where the latter only appear in pairs of
qubits corresponding to rank-3 projectors. To build up a ground
state of such form, our algorithm will use partial assignments, or
shortly assignments. An \emph{assignment} $s$ is a mapping from
$[n]$. For every $i \in [n]$, the value $s(i)$ is either a 1-qubit
state $\ket{\alpha}$, or a 2-qubit entangled state
$\ket{\gamma}_{ij}$ for some $j \neq i$, or a symbol from the set
$\{\Empty, \Bad\}$. If $s(i) = \ket{\alpha}$ or $s(i) =
\ket{\gamma}_{ij}$, then this value is assigned to qubit variable
$i$, and in the latter case the entangled state is shared with
variable $j$.  The symbol $\Empty$ is used for unassigned variables,
and the symbol $\Bad$ is used when several values are assigned to
some variable.

We define the \emph{support} of $s$ by $\supp(s) = \{i \in [n] :
s(i) \neq \Empty\}$. The assignment $s$ is \emph{empty} if $\supp(s)
= \emptyset$.  
When there is no danger of confusion, we will denote the empty assignment also by $\Empty$.
We say that an assignment is \emph{coherent} if for
every $i$, we have $s(i) \neq \Bad$, and whenever $s(i) =
\ket{\gamma}_{ij}$, we also have $s(j) = \ket{\gamma}_{ji}$.
For coherent assignments $s$ and $s'$, we say that $s'$ is an
\emph{extension} of $s$, if for every $i$, such that $s(i) \neq
\Empty$, we have $s'(i) = s(i)$. A coherent assignment is
\emph{total} if $s(i) \neq \Empty$, for all $i$.
Clearly, a coherent assignments defines a product state of 1-qubit
and 2-qubits states on qubits in its support. We denote this state
by $\ket{s}$. We say that a coherent assignment $s$ {\em satisfies} a
projector $\Pi_e$, or simply that it satisfies the edge $e$, 
if for any total extension $s'$ of $s$ we have $\Pi_e\ket{s'}=0$.

For $H = \sum_{e \in I} \Pi_e$ given in rank-1 decomposition, and a
coherent $s$, we define the \emph{reduced Hamiltonian} $H_s$ of $s$
as
\begin{align*}
  H_s = H - \sum_{s {\rm ~ satisfies ~ } e} \Pi_{e}.  
\end{align*}
We will denote the constraint graph $G(H_s)$ of the reduced
Hamiltonian $H_s$ by $G_s = (V_s, E_s)$.  We call a coherent
assignment $s$
a \emph{pre-solution} if it has a total extension $s'$ 
satisfying every constraint in $H$,  and we call $s$ a \emph{solution} if
$s$ itself satisfies every constraint in $H$.
Obviously, an
assignment is a solution if and only if $G_s$ is the empty graph.  
A coherent assignment $s$ is \emph{closed} if $\supp(s) \cap V_s = \emptyset$.


\section{Propagation}

The crucial building block of our algorithm is the propagation of
values by rank-1 projectors.  This is the quantum analog of the
classical propagation process when for example the clause $x_i \vee
x_j$ propagates the value $x_i= 0$ to the value $x_j = 1$ in the
sense that given $x_i = 0$, the choice $x_j = 1$ is the only
possibility to make the clause true. In the quantum case this notion
has already appeared in \Ref{ref:Laumann09}, and can in fact be
traced back also to Bravyi's original work. Here, we shall adopt the
following definition
\begin{Def}[Propagation]
  Let $\Pi_e = \ket{\psi}\bra{\psi}$ be a rank-$1$ projector acting
  on variables $i,j$, and let $\ket{\alpha}$ be either a $1$-qubit
  state assigned to variable $i$, or a $2$-qubit entangled state
  assigned to variables $k,i$ for some $k\neq j$. We say that
  $\Pi_e$ \emph{propagates} $\ket{\alpha}$ if, up to a phase, there
  exists a unique $1$-qubit state $\ket{\beta}$ such that $\Pi_e
  \ket{\alpha}\otimes\ket{\beta}_j = 0.$ In such case we say that
  $\ket{\alpha}$ is propagated to $\ket{\beta}$ along $\Pi_e$, or
  that $\Pi_e$ propagated $\ket{\alpha}$ to $\ket{\beta}$.
\end{Def}
The following lemma shows how the propagation properties of 
$\Pi_e=\ket{\psi}\bra{\psi}$ are determined by entanglement in
$\ket{\psi}$.
\begin{Lem}
\label{lem:propagate0}
  Consider the rank-$1$ projector $\Pi_e = \ket{\psi}\bra{\psi}$,
  defined on qubits $i,j$.  If $\ket{\psi}$ is entangled, it
  propagates \emph{every} $1$-qubit state $\ket{\alpha}_i$ to a
  state $\ket{\beta(\alpha)}_j$ such that if
  $\ket{\alpha}_i\neq\ket{\alpha'}_i$ then $\ket{\beta(\alpha)}_j
  \neq \ket{\beta(\alpha')}_j$. This propagation can be calculated
  in constant time. When $\ket{\psi}$ is a product state
  $\ket{\psi}=\ket{x}_i\otimes\ket{y}_j$, the projector $\Pi_e$ does
  not propagate states that are proportional to $\ket{x^\perp}_i$,
  while all other states are propagated to $\ket{y^\perp}_j$.
\end{Lem}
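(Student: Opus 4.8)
The plan is to reduce the defining condition of propagation to a single scalar equation governed by a $2\times 2$ matrix, and then read off both cases from whether that matrix is singular.

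First I would fix the standard basis and write $\ket{\psi}=\sum_{a,b\in\{0,1\}}\psi_{ab}\ket{a}_i\ket{b}_j$, collecting the coefficients into the $2\times 2$ matrix $M$ with $(M)_{ab}=\psi_{ab}$; let $\bar M$ denote its entrywise complex conjugate. For a $1$-qubit state $\ket{\alpha}=\sum_a\alpha_a\ket{a}$ with coordinate vector $\underline{\alpha}$, and similarly $\underline{\beta}$ for $\ket{\beta}$, the condition $\Pi_e\ket{\alpha}_i\otimes\ket{\beta}_j=0$ is equivalent to $\qip{\psi}{\alpha\otimes\beta}=0$, i.e.\ to $\sum_{a,b}\psi_{ab}^*\alpha_a\beta_b=\underline{\alpha}^T\bar M\,\underline{\beta}=0$. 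Setting $\underline{v}\defeq\bar M^T\underline{\alpha}$, this is the single equation $\underline{v}^T\underline{\beta}=0$. The key dichotomy is then immediate: if $\underline{v}\neq 0$, the solution set $\{\underline{\beta}:\underline{v}^T\underline{\beta}=0\}$ is a $1$-dimensional subspace of $\mbC^2$, so $\ket{\beta}$ is unique up to a scalar and $\Pi_e$ propagates $\ket{\alpha}$; if $\underline{v}=0$, every $\ket{\beta}$ works, so no unique $\ket{\beta}$ exists and $\Pi_e$ does not propagate $\ket{\alpha}$. I would also recall that $\ket{\psi}$ is entangled precisely when $M$ — equivalently $\bar M$, equivalently $\bar M^T$ — is nonsingular.

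For the entangled case, $\bar M^T$ is invertible, so $\underline{v}=\bar M^T\underline{\alpha}\neq 0$ for every $\underline{\alpha}\neq 0$; hence every $\ket{\alpha}_i$ is propagated to the unique $\ket{\beta(\alpha)}_j$ determined above, computable from a fixed-size matrix inverse and a null-vector computation, in constant time. For injectivity, suppose $\ket{\alpha}_i$ and $\ket{\alpha'}_i$ are not proportional; then $\underline{v}=\bar M^T\underline{\alpha}$ and $\underline{v}'=\bar M^T\underline{\alpha'}$ are not proportional either, so the two lines $\{\underline{\beta}:\underline{v}^T\underline{\beta}=0\}$ and $\{\underline{\beta}:\underline{v}'^T\underline{\beta}=0\}$ are distinct: a common nonzero $\underline{\beta}$ would force both $\underline{v}$ and $\underline{v}'$ onto the single line bilinearly orthogonal to $\underline{\beta}$, making them proportional. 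Thus $\ket{\beta(\alpha)}_j\neq\ket{\beta(\alpha')}_j$. For the product case, write $\ket{\psi}=\ket{x}_i\otimes\ket{y}_j$, so $\qip{\psi}{\alpha\otimes\beta}=\qip{x}{\alpha}\qip{y}{\beta}$, which vanishes iff $\qip{x}{\alpha}=0$ or $\qip{y}{\beta}=0$. If $\ket{\alpha}$ is proportional to $\ket{x^\perp}$ then $\qip{x}{\alpha}=0$ and every $\ket{\beta}$ satisfies the condition, so $\Pi_e$ does not propagate $\ket{\alpha}$; otherwise $\qip{x}{\alpha}\neq 0$, the condition forces $\qip{y}{\beta}=0$, whose only solution up to a scalar is $\ket{y^\perp}_j$, so $\ket{\alpha}$ is propagated to $\ket{y^\perp}_j$.

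I do not expect a genuine obstacle here: the only care needed is bookkeeping, namely that the relevant pairing $\underline{\alpha}^T\bar M\,\underline{\beta}$ is bilinear (not sesquilinear) in $\underline{\beta}$, and that "unique" means unique up to a global scalar; once the condition is rewritten in this matrix form, both parts of the lemma, together with the constant-time claim, follow directly from (non)invertibility of the coefficient matrix.
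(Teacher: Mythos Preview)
Your proof is correct and follows essentially the same approach as the paper: both reduce the condition $\Pi_e(\ket{\alpha}\otimes\ket{\beta})=0$ to the bilinear equation $\sum_{a,b}\psi^*_{ab}\alpha_a\beta_b=0$ governed by the $2\times 2$ coefficient matrix, and then read off the entangled and product cases from whether that matrix is nonsingular. Your write-up is in fact slightly more explicit than the paper's (which leaves the product case as ``straightforward'' and states the injectivity claim without the orthogonal-line argument), but there is no substantive difference in method.
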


\begin{proof}
  Assume that $\ket{\psi}$ is entangled and consider the state
  $\ket{\alpha}$. Our task is to show that there always exists a
  unique $\ket{\beta}$ (up to an overall constant) such that 
  $\Pi(\ket{\alpha}\otimes \ket{\beta})=0$, and that different
  $\ket{\alpha}$'s yield different $\ket{\beta}$'s. 
  
  Expanding $\ket{\psi}$, $\ket{\alpha}$, and $\ket{\beta}$ in the
  standard basis $\ket{\psi}=\sum_{i,j}\psi_{ij}
  \ket{i}\otimes\ket{j}$; $\ket{\alpha}=\sum_i\alpha_i\ket{i}$;
  $\ket{\beta}=\sum_j\beta_j\ket{j}$, the condition 
  $\Pi_e(\ket{\alpha}\otimes\ket{\beta}) =0$ translates to 
  $\sum_{i,j}\psi^*_{ij}\alpha_i\beta_j=0$.  Assuming that
  $\ket{\psi}$ is entangled, one can easily verify that the $2\times
  2$ matrix $(\psi^*_{ij})$ is non-singular. Then using the simple
  fact that in a two-dimensional space every non-zero vector has
  exactly one non-zero vector (up to an overall scaling) which it is
  orthogonal to, it is straightforward to deduce that for every
  non-zero vector $(\alpha_0, \alpha_1)$ there is a unique (up to
  scaling) non-zero vector $(\beta_0, \beta_1)$ such that
  $\sum_{i,j}\psi^*_{ij}\alpha_\mu\beta_\nu=0$. Moreover, 
  $(\beta_0,\beta_1)$ can be calculated in constant time, and that
  different $(\alpha_0, \alpha_1)$ necessarily yield different
  $(\beta_0, \beta_1)$.
  
  The case when $\ket{\psi}$ is a product state is straightforward.
\end{proof}

We now present two lemmas that describe the structure of the \emph{global}
ground state of the system, if we know that part of it is in a
tensor product of 1-qubit or 2-qubits states, which are then
propagated by some $\Pi_e$.
\begin{Lem}[Single qubit propagation]
\label{lem:propagate1}

  Consider a frustration-free ${\rm Q2SAT}$ system $H=\sum_{e \in I}
  \Pi_e$ with a rank-$1$ projector $\Pi_e = \ket{\psi}\bra{\psi}$
  between qubits $i,j$, and assume that $H$ has a ground state of
  the form $\ket{\gs}=\ket{\alpha}_i\otimes\ket{rest}$. Then:
  \begin{enumerate}
    \item If $\Pi_e$ propagates $\ket{\alpha}_i$ to $\ket{\beta}_j$
      then necessarily $\ket{rest}=\ket{\beta}_j\otimes\ket{rest'}$.
    \item $\ket{\gs}$ is also a ground state of the 
      ${\rm Q2SAT}$ Hamiltonian $H-\Pi_e$.
  \end{enumerate}
\end{Lem}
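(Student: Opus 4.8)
The plan is to handle the two parts separately, doing the (easy) second one first. For part~2, recall that $H$ is frustration-free and a sum of projectors, so its ground state $\ket{\gs}$ has energy $0$ and hence satisfies $\Pi_{e'}\ket{\gs}=0$ for every $e'\in I$. In particular every summand of $H-\Pi_e=\sum_{e'\neq e}\Pi_{e'}$ annihilates $\ket{\gs}$, so $(H-\Pi_e)\ket{\gs}=0$. Since $H-\Pi_e$ is again a sum of projectors it is positive semidefinite, so $0$ is its least eigenvalue and $\ket{\gs}$ attains it; as $H-\Pi_e$ is still a ${\rm Q2SAT}$ Hamiltonian, this is exactly the claim.

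For part~1, the idea is to expand $\ket{rest}$ over the computational basis of qubit $j$ and notice that the scalars that surface when $\Pi_e$ is applied are precisely the ones that govern propagation along $\Pi_e$. Concretely, write $\ket{rest}=\ket{0}_j\otimes\ket{r_0}+\ket{1}_j\otimes\ket{r_1}$ with $\ket{r_0},\ket{r_1}$ supported on the qubits other than $i$ and $j$, and set $c_m\defeq\bra{\psi}\bigl(\ket{\alpha}_i\otimes\ket{m}_j\bigr)$ for $m\in\{0,1\}$. Applying $\Pi_e=\ket{\psi}\bra{\psi}$ to $\ket{\gs}=\ket{\alpha}_i\otimes\ket{rest}$ and using $\Pi_e\ket{\gs}=0$ yields $\ket{\psi}_{ij}\otimes\bigl(c_0\ket{r_0}+c_1\ket{r_1}\bigr)=0$, hence $c_0\ket{r_0}+c_1\ket{r_1}=0$.

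Now a $1$-qubit state $\ket{\gamma}=\gamma_0\ket{0}+\gamma_1\ket{1}$ satisfies $\Pi_e\bigl(\ket{\alpha}_i\otimes\ket{\gamma}_j\bigr)=0$ if and only if $c_0\gamma_0+c_1\gamma_1=0$, so the hypothesis that $\Pi_e$ propagates $\ket{\alpha}_i$ means this linear equation has a one-dimensional solution space, that is $(c_0,c_1)\neq(0,0)$, and $\ket{\beta}_j$ is, up to a phase, its unique solution $\ket{\beta}\propto c_1\ket{0}-c_0\ket{1}$ (the excluded case $(c_0,c_1)=(0,0)$ is exactly the non-propagating product case of Lemma~\ref{lem:propagate0}). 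Plugging this back into $c_0\ket{r_0}+c_1\ket{r_1}=0$: if $c_1\neq0$ then $\ket{r_1}=-(c_0/c_1)\ket{r_0}$, so $\ket{rest}=\tfrac1{c_1}\bigl(c_1\ket{0}_j-c_0\ket{1}_j\bigr)\otimes\ket{r_0}=\ket{\beta}_j\otimes\ket{rest'}$ with $\ket{rest'}\propto\ket{r_0}$, and the case $c_0\neq0$ is symmetric; this is the claimed factorization. The computation is routine two-dimensional linear algebra, and the only point that wants care is to see that the $\ket{\beta}$ coming out of the factorization is the same as the $\ket{\beta}$ produced by the propagation — both are pinned down by the single linear form $(c_0,c_1)$, up to the overall phase that the definition of propagation leaves free, so they agree.
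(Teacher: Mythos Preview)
Your proof is correct and follows essentially the same approach as the paper: expand $\ket{rest}$ over a basis of qubit $j$, apply $\Pi_e\ket{\gs}=0$, and read off the factorization. The only difference is that the paper expands in the basis $\{\ket{\beta}_j,\ket{\beta^\perp}_j\}$ rather than the computational basis, which makes the conclusion slightly quicker (one term vanishes by definition of propagation, forcing the coefficient of $\ket{\beta^\perp}_j$ to be zero), whereas you arrive at the same point via the explicit linear relation $c_0\ket{r_0}+c_1\ket{r_1}=0$; part~2 is handled identically in both.
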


\begin{proof}
  For the first claim assume that $\Pi_e$ propagates
  $\ket{\alpha}_i$ to $\ket{\beta}_j$. Without loss of generality,
  we may expand
  \begin{align*}
    \ket{rest} =\ket{\beta}_j\otimes\ket{rest_1} +
      \ket{\beta^\perp}_j\otimes\ket{rest_2} \ ,
  \end{align*}
  where the states $\ket{rest_1}, \ket{rest_2}$ are defined on
  all the qubits of the system except for $(i,j)$, and are not
  necessarily normalized. Plugging this expansion into the condition
  $\Pi_e\ket{\gs}=0$, we obtain the equation
  \begin{align*}
    (\Pi_e\ket{\alpha}_i\ket{\beta}_j)\otimes \ket{rest_1}
     + (\Pi_e\ket{\alpha}_i\ket{\beta^\perp}_j)\otimes \ket{rest_2}
     =0 \,.
  \end{align*}
  Since $\Pi_e$ propagates $\ket{\alpha}_i$ to $\ket{\beta}_j$, we
  have $\Pi_e \ket{\alpha}_i\ket{\beta}_j=0$ and $\Pi_e
  \ket{\alpha}_i\ket{\beta^\perp}_j\neq 0$. Therefore, the
  above equation implies that $\ket{rest_2}=0$, and we may set
  $\ket{rest'}=\ket{rest_1}$.
  
  The second claim follows trivially from the frustration-freeness
  of the system.
\end{proof}

\begin{Lem}[Entangled 2-qubits propagation]
\label{lem:propagate2}
  Consider a frustration-free ${\rm Q2SAT}$ system $H$ with a rank-$1$
  projector $\Pi_e = \ket{\psi}\bra{\psi}$ between qubits $i,j$.
  Assume that $H$ has a ground state of the form
  $\ket{\gs}=\ket{\phi}_{ik}\otimes\ket{rest}$, where $\ket{\phi}$
  is an entangled state on qubits $i,k$ with $k\neq j$.  Then:
  \begin{enumerate}
    \item $\ket{\psi}$ is 
    a product state $\ket{\psi} 
      =\ket{x}\ket{y}$.
      
    \item $\Pi_e$ propagates $\ket{\phi}$ to $\ket{y^\perp}$ and 
      necessarily $\ket{rest} =
      \ket{y^\perp}_j\otimes\ket{rest'}$.
    
     \item $\ket{\gs}$ is also a ground state of the ${\rm Q2SAT}$ 
       Hamiltonian $H-\Pi_e$.
  \end{enumerate}
\end{Lem}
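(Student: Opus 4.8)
The plan is to run everything off frustration-freeness together with an analysis of the two-qubit marginal of $\ket{\gs}$ on the pair $i,j$ that $\Pi_e$ acts on. Since $H$ is frustration-free and $\ket{\gs}$ is a ground state, $\Pi_e\ket{\gs}=0$. Writing $\rho_{ij}$ for the reduced density operator of $\ket{\gs}$ on qubits $i,j$, this gives $\Tr(\Pi_e\rho_{ij})=\norm{\Pi_e\ket{\gs}}^2=0$, and as $\Pi_e=\bk{\psi}$ and $\rho_{ij}$ are both positive semidefinite, $\Pi_e\rho_{ij}=0$; equivalently $\rho_{ij}\ket{\psi}=0$, i.e.\ $\ket{\psi}$ is orthogonal to $\mathrm{range}(\rho_{ij})$. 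The whole lemma will come from understanding this range.

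The key structural point is that $\ket{\gs}=\ket{\phi}_{ik}\otimes\ket{rest}$ with qubit $i$ in the first tensor factor and qubit $j$ in the second (here $i,j,k$ are distinct, in particular $k\ne j$ by hypothesis), so $\rho_{ij}$ factorizes: $\rho_{ij}=\sigma_i\otimes\tau_j$, where $\sigma_i=\Tr_k\bk{\phi}$ is the one-qubit marginal of the \emph{entangled} state $\ket{\phi}$, hence has rank $2$, and $\tau_j$ is the one-qubit marginal of $\ket{rest}$ on qubit $j$, which is nonzero. Therefore $\mathrm{range}(\rho_{ij})=\mathbb{C}^2_i\otimes\mathrm{range}(\tau_j)$. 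If $\tau_j$ had rank $2$ this would be all of $\mathbb{C}^2\otimes\mathbb{C}^2$, leaving no nonzero vector for $\ket{\psi}$ to be orthogonal to; so $\tau_j$ has rank exactly $1$, say $\tau_j\propto\bk{z}$ for a unit vector $\ket{z}$. A pure bipartite state with a rank-$1$ marginal on qubit $j$ is a product state across that cut, so $\ket{rest}=\ket{z}_j\otimes\ket{rest'}$. Then $\mathrm{range}(\rho_{ij})=\mathbb{C}^2_i\otimes\mathrm{span}\{\ket{z}\}$, and $\ket{\psi}\perp\mathrm{range}(\rho_{ij})$ says precisely that $\ket{\psi}$ has no component in $\mathbb{C}^2_i\otimes\mathrm{span}\{\ket{z}\}$, i.e.\ $\ket{\psi}=\ket{x}_i\otimes\ket{y}_j$ with $\ket{y}\perp\ket{z}$. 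This proves claim~1, and, writing $\ket{y^\perp}$ for the unit vector orthogonal to $\ket{y}$ (so $\ket{z}\propto\ket{y^\perp}$), it gives the second half of claim~2: $\ket{rest}=\ket{y^\perp}_j\otimes\ket{rest'}$.

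For the propagation part of claim~2, use $\ket{\psi}=\ket{x}_i\ket{y}_j$ directly: for a $1$-qubit state $\ket{\beta}_j$ we have $\Pi_e(\ket{\phi}_{ik}\otimes\ket{\beta}_j)=0$ iff $\bra{\psi}(\ket{\phi}_{ik}\otimes\ket{\beta}_j)=0$ iff $(\bra{x}_i\ket{\phi}_{ik})\cdot\langle y|\beta\rangle=0$. Since $\ket{\phi}$ is entangled its marginal on qubit $i$ is full rank, so $\bra{x}_i\ket{\phi}_{ik}\ne 0$, and the only solutions are $\ket{\beta}\propto\ket{y^\perp}$ — unique up to a phase — so $\Pi_e$ propagates $\ket{\phi}$ to $\ket{y^\perp}$ (this also follows from Lemma~\ref{lem:propagate0} applied to the product projector $\Pi_e$, noting that an entangled $\ket{\phi}$ never equals $\ket{x^\perp}$ on qubit $i$). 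Claim~3 is immediate: $H-\Pi_e$ is again a sum of at most $2$-local projectors, and by frustration-freeness $\Pi_{e'}\ket{\gs}=0$ for all $e'\ne e$, so $(H-\Pi_e)\ket{\gs}=0$; since $H-\Pi_e\succeq 0$, $\ket{\gs}$ is a zero-energy ground state of $H-\Pi_e$.

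The step that needs the most care is the passage from "$\ket{\psi}\perp\mathrm{range}(\rho_{ij})=\mathbb{C}^2_i\otimes\mathrm{range}(\tau_j)$" to the two simultaneous conclusions that $\tau_j$ has rank $1$ and that $\ket{\psi}$ factorizes with $\ket{y}\perp\ket{z}$. One must justify carefully that $\rho_{ij}=\sigma_i\otimes\tau_j$ (this uses exactly that qubit $j$ lies outside the entangled block $\{i,k\}$, i.e.\ $k\ne j$) and that $\mathrm{range}$ distributes over the tensor product of PSD operators; an equivalent, slightly more computational route is to expand $\ket{\phi}_{ik}=\ket{0}_i\ket{\phi_0}_k+\ket{1}_i\ket{\phi_1}_k$ with $\ket{\phi_0},\ket{\phi_1}$ linearly independent (by entanglement) and similarly split $\ket{rest}$ on qubit $j$, and read off the vanishing conditions. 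Everything else is bookkeeping.
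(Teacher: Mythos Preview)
Your proof is correct and takes a genuinely different route from the paper. The paper argues via the Schmidt decomposition of $\ket{\phi}_{ik}$: writing $\ket{\phi}=\lambda_1\ket{\alpha}\ket{\beta}+\lambda_2\ket{\alpha^\perp}\ket{\beta^\perp}$ with both $\lambda_i\neq 0$, the condition $\Pi_e\ket{\gs}=0$ splits into $\Pi_e(\ket{\alpha}_i\otimes\ket{rest})=\Pi_e(\ket{\alpha^\perp}_i\otimes\ket{rest})=0$, and then the previously established Lemmas~\ref{lem:propagate0} and~\ref{lem:propagate1} are invoked to derive a contradiction if $\ket{\psi}$ were entangled (two different propagated values on qubit $j$) and to pin down $\ket{rest}$ when $\ket{\psi}$ is product. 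Your argument instead passes through the reduced density matrix: the tensor structure of $\ket{\gs}$ forces $\rho_{ij}=\sigma_i\otimes\tau_j$ with $\sigma_i$ full rank, and then $\ket{\psi}\perp\mathrm{range}(\rho_{ij})$ immediately constrains $\tau_j$ to rank $1$ and places $\ket{\psi}$ in the two-dimensional subspace $\mbC^2_i\otimes\spa\{\ket{z^\perp}\}$, which consists only of product vectors. Your approach is more self-contained (it does not lean on the earlier propagation lemmas) and arguably cleaner conceptually; the paper's approach is more modular, reusing the single-qubit machinery already in place. Both yield the same conclusions with comparable effort.
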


\begin{proof}
  Write $\ket{\phi}$ in its Schmidt decomposition $\ket{\phi} =
  \lambda_1\ket{\alpha}\otimes\ket{\beta} +
  \lambda_2\ket{\alpha^\perp}\otimes\ket{\beta^\perp}$, and note
  that both $\lambda_1,\lambda_2\neq 0$, since $\ket{\phi}$ is
  entangled. Plugging this into the condition $\Pi_e\ket{\gs}=0$, we
  get
  \begin{align*}
     \Pi_e\ket{\gs}=\lambda_1\ket{\beta}_k\otimes \Pi_e
       \big(\ket{\alpha}_i\otimes \ket{rest}\big)
     +\lambda_2\ket{\beta^\perp}_k\otimes \Pi_e
       \big(\ket{\alpha^\perp}_i\otimes \ket{rest}\big) =0 \,.
  \end{align*}
  Since $\ket{\beta}$ is is linearly independent of
  $\ket{\beta^\perp}$, we conclude that
  $\Pi_e\big(\ket{\alpha}_i\otimes \ket{rest}\big) =
  \Pi_e\big(\ket{\alpha^\perp}_i\otimes \ket{rest}\big) =0$.
  
  To prove the first claim, assume by contradiction,
  that $\ket{\psi}$ is entangled. Then by
  Lemma~\ref{lem:propagate0}, $\Pi_e$ propagates $\ket{\alpha}$ and
  $\ket{\alpha^\perp}$ to two \emph{different} states, say,
  $\ket{\gamma_1}\neq \ket{\gamma_2}$. But then by
  Lemma~\ref{lem:propagate1}, it follows that $\ket{rest}$ must be
  both in the form $\ket{\gamma_1}_j\otimes\ket{rest'}$ and
  $\ket{\gamma_2}_j\otimes\ket{rest'}$ -- which is a contradiction!
  
  For the second claim, assume that
  $\ket{\psi}=\ket{x}\otimes\ket{y}$ is a product state. Since
  $\Pi_e\big(\ket{\alpha}_i\otimes \ket{rest}\big) =
  \Pi_e\big(\ket{\alpha^\perp}_i\otimes \ket{rest}\big) =0$, both
  states $\ket{\alpha}_i\otimes \ket{rest},
  \ket{\alpha^\perp}_i\otimes \ket{rest}$ are ground states of the
  single projector Hamiltonian $\tilde{H}=\Pi_e$. Using
  Lemma~\ref{lem:propagate0} and Lemma~\ref{lem:propagate1},
  together with the fact that 
  and at
  least one of the states $\ket{\alpha}, \ket{\alpha^\perp}$ is
  different from $\ket{x^\perp}$, we conclude that 
  $\ket{rest}=\ket{y^\perp}_j\otimes\ket{rest'}$. 
    
  The third claim, as before, follows simply from the
  frustration-freeness of the system.
\end{proof}

Let $H$ be a Q2SAT Hamiltonian in rank-1 decomposition, let
$s$ be a coherent assignment, 
and let $G_s = (V_s, E_s)$ be the constraint graph of the reduced
Hamiltonian $H_s$. We would like to describe in $G_s$ the result of
the \emph{iterated propagation} process when a value given to variable $i$ is 
propagated along all possible projectors, then the propagated values
are propagated on their turn, and so on until no more value assigned
during this process can be further propagated. 
The propagation can get started when the initial value is already assigned by $s$, that is  when 
$s(i) = \ket{\delta}$  for $\ket{\delta} \in \{ \ket{\alpha}, \ket{\gamma}_{ij}\}$,
where $\ket{\alpha}$ is some 1-qubit state and $\ket{\gamma}_{ij}$ some a 2-qubit state,
or it can get started when $s(i) = \Empty$, in which case we shall explicitly
choose a 1-qubit state $\ket{\alpha}$ 
and assign it to $i$.

Let now $s, i$ and $\ket{\delta}$ be such that $s(i) \in \{\Empty , \ket{\delta} \}$.
We say that in the constraint graph $G_s$ an edge $e \in E_s$ from
$i$ to $j$ \emph{propagates} $\ket{\delta}$ if $\Pi_e$ propagates it, and
we denote 
by $\pro(s, e, \ket{\delta})$ the state $\ket{\delta}$ is propagated to.
We generalize the notion of propagation in $G_s$ from edges to
paths. Let $i = i_0, i_1, \ldots i_k$ be vertices in $V_s$, and let
$e_j$ be an edge from $i_j$ to $i_{j+1}$, for $j = 0, \ldots, k-1$.
Let $s(i) \in \{\Empty , \ket{\delta} \}$, and set $\ket{\alpha_0} = \ket{\delta}$. 
Let $\ket{\alpha_1}, \ldots, \ket{\alpha_k}$
be states such that the propagation of $\ket{\alpha_j }$ along
$\Pi_{e_j}$ is $ \ket{\alpha_{j+1}}$, for $j = 0, \ldots, k-1$. Then
we say that the path $p = (e_0, \ldots, e_{k-1})$ from $i_0$ to
$i_k$  \emph{propagates} $\ket{\delta}$, and we set $\pro(s,p, \ket{\delta}) =
\ket{\alpha_k}$. We say that a vertex $j \in V_s$ is
\emph{accessible} by propagating $\ket{\delta}$ from $i$ if either $j=i$ or there
is a path from $i$ to $j$ that propagates $\ket{\delta}$. We denote by $V^{\pro}_s (i, \ket{\delta})$
the set of such vertices, and by $\extt^{\pro}_s (i, \ket{\delta})$ the extension of $s$ by
the values given to the vertices in $V^{\pro}_s (i, \ket{\delta})$ by iterated propagation.

Let us suppose that $s' = \extt^{\pro}_s (i, \ket{\delta})$ is also coherent. The set
$V^{\pro}_s (i, \ket{\delta})$ divides the edges $E_s$ into three disjoint
subsets: the edges $E_1$ of the induced subgraph
$G(V^{\pro}_s (i, \ket{\delta}))$, the edges $E_2$ between the induced subgraphs
$G(V^{\pro}_s (i, \ket{\delta}))$ and $G(V_s \setminus V^{\pro}_s (i, \ket{\delta}))$, and the
edges $E_3$ of the induced subgraph $G(V_s \setminus
V^{\pro}_s (i, \ket{\delta}))$. While the edges in $E_1 \cup E_2$ are satisfied by
$s'$, none of the edges in $E_3$ is satisfied. Therefore $G_{s'}$ is
nothing but $G(V_s \setminus V^{\pro}_s (i, \ket{\delta}))$ without the
isolated vertices, and it can be constructed by the following
process. Given $s$ and $i$, the edges in $E_1 \cup E_2$ can be
traversed via a breadth first search rooted at $i$. The levels of
the tree are decided dynamically: at any level the next level is
composed of those vertices whose value is propagated from the
current level. The leaves of the tree are vertices in $V_s \setminus
V^{\pro}_s (i, \ket{\delta})$. The algorithm {\sf Propagation} uses a temporary
queue $Q$ to
implement this
process.

\begin{algorithm}
	\caption{{\sf Propagation}$(s, G_s, i, \ket{\delta})$}
	\label{alg:prop0}
	\begin{algorithmic}[H]
	\State
	$s(i) := \ket{\delta}$
	\State
	{\bf create} a list $L$ and a queue $Q$, and 
	{\bf put} $i$ into $Q$
	\While {$s$ is coherent  and $Q$ is not empty} 
		\State {\bf remove} the head $j$ of $Q$ 
		\For {all edge $e$ from $j$ to $k$ }
		\State
		{\bf remove} $e$ from $E_s$

		\If  {$e$ propagates $s(j)$}
		\State
		 $s(k) :=
\begin{cases}
\pro(s, e, s(j))  & \text{if ~$ s(k) = \Empty$}\\
\Bad & \text{if ~$ s(k) \not\in \{\Empty, \pro(s, e, s(j)) \}$ }
\end{cases}
$

		 \State
		 {\bf enqueue} $k$
		 \EndIf
		\If {$e$ is not propagating and $k$ is not in $L$} 
		{\bf put} $k$ into $L$ 
		\EndIf
		\EndFor
		\State
		{\bf remove} $j$ from $V_s$
		\EndWhile
		\State
		{\bf if} {$s$ is not coherent }
		\Return ``unsuccessful"
		\For {all $k$ in $L$}
		\For {all edges $e$ from $k$ to $\ell$}
		\If {$\ell$ was removed from $V_s$} {\bf remove} $e$
		\EndIf
		\EndFor
		\If {all edges outgoing from $k$ were removed}
		{\bf remove} $k$ from $V_s$
		\EndIf
		\EndFor
		


	\end{algorithmic}
\end{algorithm}

\begin{Lem}$({\bf Propagation ~  Lemma})$
\label{lem:propagate}

  Let {\sf Propagation}$(s, G_s, i, \ket{\delta})$ be called when $H_s$ doesn't
  have rank-$3$ constraints, and  $s(i) \in \{\Empty , \ket{\delta} \}$.
  Let $s'$
  and $G' = (V',E')$ be the outcome of the procedure. Then:
  \begin{enumerate}
    \item If {\sf Propagation}$(s, G_s, i, \ket{\delta})$ doesn't 
      return ``unsuccessful" then $s' = \extt^{\pro}_s (i, \ket{\delta})$ and $G' = G_{s'}$.
      Moreover, if $s$ is
      a pre-solution then $s'$ is a pre-solution, and if $s$ is closed then $s'$ is also closed.
      
    \item If {\sf Propagation}$(s, G_s, i)$ returns ``unsuccessful" 
      then there is no solution $z$ of which is an extension of $s$ and for which $z(i) = \ket{\delta}$.
      
    \item The complexity of the procedure is $O( |E_s| - | E_{s'}|$).
  \end{enumerate}
\end{Lem}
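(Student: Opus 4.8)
The plan is to analyze the algorithm {\sf Propagation}$(s, G_s, i, \ket{\delta})$ by tracking exactly which edges get removed from $E_s$ and in what order, and correlating this with the sets $E_1, E_2, E_3$ defined above from $V^{\pro}_s(i,\ket{\delta})$. First I would establish an invariant of the main {\bf while} loop: at the start of each iteration, the queue $Q$ contains exactly the vertices that have been assigned a propagated value but whose outgoing edges have not yet been processed, $s$ agrees with $\extt^{\pro}_s(i,\ket{\delta})$ on all vertices dequeued so far, and every edge removed from $E_s$ up to this point lies in $E_1 \cup E_2$. The key observation making this work is Lemma~\ref{lem:propagate0}: since $H_s$ has no rank-3 constraints, every edge is a rank-1 projector, and each such projector either propagates the current value to a \emph{unique} successor state, or fails to propagate (the product-state case). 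So when we process an edge $e$ from $j$ to $k$ and it propagates $s(j)$, the value $\pro(s,e,s(j))$ is forced; by Lemmas~\ref{lem:propagate1} and \ref{lem:propagate2}, any ground state extending $\ket{s}$ with $s(i)=\ket{\delta}$ must carry this value on qubit $k$, and must also satisfy $\Pi_e$ and hence $H-\Pi_e$ has the same frustration-freeness. This is what lets us remove $e$ unconditionally once it has been examined — whether or not it propagated — which is the crux of why $G' = G_{s'}$: every edge in $E_1 \cup E_2$ gets satisfied by $s'$ (propagating edges directly; non-propagating edges $e = (j,k)$ because, by Lemma~\ref{lem:propagate0}, a non-propagating rank-1 projector between $j$ and $k$ has the form $\ket{x}_j\ket{y}_k$ with $s(j) \propto \ket{x^\perp}_j$, so it is satisfied regardless of the value on $k$).

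Next I would handle claim~1 in detail. The {\bf while} loop terminates because each iteration removes a vertex from $V_s$ and the graph is finite; at termination, either $s$ became incoherent (handled in claim~2) or $Q$ emptied, meaning every vertex reachable by propagation from $i$ has been dequeued and all its outgoing edges removed. At that point the removed edges are precisely $E_1 \cup E_2$ minus — wait, not quite: edges in $E_2$ going from a leaf (a vertex in $V_s \setminus V^{\pro}_s(i,\ket{\delta})$) back into $V^{\pro}_s(i,\ket{\delta})$ are never traversed from the leaf side, but they \emph{are} removed from the $V^{\pro}$ side when that endpoint is dequeued, since the {\bf for} loop removes $e$ from $E_s$ before testing propagation. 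So all of $E_1 \cup E_2$ is removed. The final cleanup loop over $L$ then removes isolated vertices: a vertex $k \in L$ is a leaf that received no propagated value; we delete any of its edges whose other endpoint was removed (these are exactly its edges in $E_2$), and if this leaves $k$ with no outgoing edges we drop it from $V_s$. Hence $G' = G(V_s \setminus V^{\pro}_s(i,\ket{\delta}))$ with isolated vertices removed, which by the discussion preceding the algorithm is exactly $G_{s'}$. The claim that $s' = \extt^{\pro}_s(i,\ket{\delta})$ follows from the loop invariant. For the pre-solution part: if $s$ is a pre-solution, it has a total extension $z$ satisfying all of $H$; I would argue by the propagation lemmas that $z$ must agree with every value assigned during the run (each propagated value is forced), so $s'$ is consistent with $z$, hence $s'$ is a pre-solution (and in particular coherent). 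For closedness: $\supp(s') = \supp(s) \cup V^{\pro}_s(i,\ket{\delta})$, and $V_{s'} = V_s \setminus V^{\pro}_s(i,\ket{\delta})$ minus isolated vertices; if $s$ was closed then $\supp(s) \cap V_s = \emptyset$, and the newly supported vertices $V^{\pro}_s(i,\ket{\delta})$ are removed from $V_{s'}$, so $\supp(s') \cap V_{s'} = \emptyset$.

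For claim~2, suppose the procedure returns ``unsuccessful'', i.e. $s$ became incoherent. This happens only when some vertex $k$ receives two distinct propagated values, i.e. $s(k) = \Bad$ was set because $e$ propagated $s(j)$ to a state different from a value already on $k$. By the loop invariant and Lemmas~\ref{lem:propagate1}/\ref{lem:propagate2}, any solution $z$ extending $s$ with $z(i) = \ket{\delta}$ would be forced to carry both incompatible values on qubit $k$ — impossible. Hence no such $z$ exists. I should be slightly careful here: the two conflicting values on $k$ might be a previously-\emph{assigned} value from $s$ itself versus a propagated one, or two propagated values arriving along different paths; in both cases the forcing argument from the propagation lemmas applies to each value independently, yielding the contradiction. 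For claim~3, the running time: the main loop does $O(1)$ work per edge it removes from $E_s$ (dequeue, test propagation, compute $\pro$ in constant time by Lemma~\ref{lem:propagate0}, possibly enqueue), plus $O(1)$ per vertex dequeued, and $\#$ vertices dequeued is $O(\#$ edges removed$)$ plus one (for $i$ itself if it had no edges). The cleanup loop does $O(1)$ per edge incident to $L$, all of which are among the removed edges. Since the removed edges are exactly those in $E_s \setminus E_{s'} = E_1 \cup E_2$ (in the successful case; in the unsuccessful case we remove a subset and just bound by $|E_s|$, but we can still phrase it as $O(|E_s| - |E_{s'}|)$ by taking $E_{s'} = \emptyset$ convention, or note the statement is meant for the successful case — I would state it as the number of edges removed), the total is $O(|E_s| - |E_{s'}|)$.

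\medskip
\noindent\textbf{Main obstacle.} The delicate point is the bookkeeping in claim~1 showing $G' = G_{s'}$ \emph{exactly} — in particular verifying that non-propagating edges are genuinely satisfied by $s'$ (needs the product-state structure from Lemma~\ref{lem:propagate0} and the guarantee that $s(j)$ is proportional to the ``dead'' state $\ket{x^\perp}$, which itself requires the propagation lemmas to show the value on $j$ is the one that was propagated in), and that the two-phase edge removal (during BFS, then in the cleanup over $L$) removes precisely $E_1 \cup E_2$ and no more, so that no edge of $E_3$ is accidentally deleted and no edge of $E_1\cup E_2$ survives. Handling the coherence assumption — the lemma's hypothesis in the surrounding text is that $s' = \extt^{\pro}_s(i,\ket{\delta})$ is coherent, but the algorithm must detect incoherence, so I need to reconcile that the ``successful'' branch is exactly the coherent case — is the other subtlety to get right.
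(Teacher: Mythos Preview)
Your treatment of the graph bookkeeping ($G'=G_{s'}$), of claim~2, and of the complexity bound is essentially the same as the paper's and is fine. There is, however, a genuine gap in your argument that $s'$ remains a pre-solution.

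You write: ``if $s$ is a pre-solution, it has a total extension $z$ satisfying all of $H$; I would argue by the propagation lemmas that $z$ must agree with every value assigned during the run (each propagated value is forced), so $s'$ is consistent with $z$.'' This forcing argument requires $z(i)=\ket{\delta}$, but the hypothesis only gives $s(i)\in\{\Empty,\ket{\delta}\}$. When $s(i)=\Empty$ --- precisely the situation in {\sf ProbePropagation}, where $\ket{\delta}=\ket{0}$ is an arbitrary guess --- the solution $z$ witnessing that $s$ is a pre-solution may have $z(i)\neq\ket{\delta}$, and then $z$ need not agree with $s'$ on any of the propagated vertices. So you cannot conclude that $z$ extends $s'$.

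The paper fixes this by \emph{constructing a new} solution rather than reusing $z$: take any solution $z$ extending $s$ that is a product state on $V_s$ (Theorem~\ref{thm:productstate} guarantees one, since $H_s$ has no rank-3 constraints), and define $z'$ to equal $s'$ on $\supp(s')$ and $z$ elsewhere. Then $z'$ satisfies edges already satisfied by $s$ (it extends $s$), satisfies the edges in $E_1\cup E_2$ because $s'$ does (for $E_2$ edges this uses exactly your observation that non-propagating edges are satisfied by the $V^{\pro}$-side value alone), and satisfies $E_3$ because $z$ does and $z'=z$ there. This splicing is the missing idea; once you have it, the rest of your outline goes through. Note incidentally that what you flagged as the ``main obstacle'' (the $G'=G_{s'}$ bookkeeping) is not where the difficulty lies --- the paper dispatches it in a few lines, as you essentially do.
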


\begin{proof}
The assignments made during the breadth first search correspond exactly to the
the paths propagating $\ket{\delta}$ from $i$, therefore the extension of $s$ created by the process is
indeed $s' = \extt^{\pro}_s (i, \ket{\delta})$. 
  The while loop removes the edges 
  between vertices in $V^{\pro}_s (i, \ket{\delta})$ and the edges which go from $V^{\pro}_s (i, \ket{\delta})$ to
  $V_s \setminus V^{\pro}_s (i, \ket{\delta})$, as well as the vertices  in $V^{\pro}_s (i, \ket{\delta})$.
 Then the edges from $V_s \setminus V^{\pro}_s (i, \ket{\delta})$ to $V^{\pro}_s (i, \ket{\delta})$
 are removed, as well as the remaining vertices without outgoing (and incoming) edges.
 %
Therefore we have $G' = G_{s'}$. 
  

Let us suppose that $s$ is a pre-solution, and let $z$ be an extension of $s$
which is a solution and which is a product state on the vertices in $V_s$. By Theorem~\ref{thm:productstate}
there exists such a solution since $H_s$ doesn't have rank-3 constraints. We define the assignment $z'$ by
$$z'(j) =
\begin{cases}
s'(j)  & \text{if ~$ j \in \supp(s')$}\\
z(j) & \text{otherwise.}
\end{cases}
$$
Then $z'$ is a solution which is an extension of $s'$, and therefore $s'$ is a pre-solution.
 If $s$ is closed then
so is  $s'$ since only the vertices in $V^{\pro}_s (i, \ket{\delta})$ 
get assigned during the process, and they are not included into $V_{s'}$.

 
  Let us now suppose that the procedure returns ``unsuccessful".
  Then there is a vertex  $k \in V^{\pro}_s (i, \ket{\delta})$, and two paths
  $p$ and $p'$ in $G_s$ from $i$ to $k$ such that 
  $\pro(s,p, \ket{\delta}) = \ket{\beta}$, $\pro(s,p', \ket{\delta}) = \ket{\beta'}$
  and $\ket{\beta} \neq \ket{\beta'}$. 
  Let us also suppose that there exists a solution $z$ which is
  an extension of $s$ and for which $z(i) = \ket{\delta}$.
  Then by the repeated use of Lemma~\ref{lem:propagate1},
  and also by using once Lemma~\ref{lem:propagate2} when $\ket{\delta}$ is a 2-qubit entangled state,
  we conclude that $z(k)$ is simultaneously equal to $\ket{\beta}$ and to $\ket{\beta'}$, which is a contradiction.

%
  
  Finally Statement 3 follows since
  every step of the procedure can be naturally charged to an edge in $E_s \setminus E_{s'}$,
  and every edge is charge only a constant times.

\end{proof}


\section{The main algorithm}

\subsection{Description of the algorithm} 

We now give in broad lines a description of our algorithm we call
{\sf Q2SATSolver}. It takes as input the the adjacency list
representation of the constraint graph $G(H)$ of a 2-local
Hamiltonian $H$ in rank-1 decomposition. The algorithm uses four
global variables: assignments $s_0$ and $s_1$ initialized to
$\Empty$, and graphs $G_0$ and $G_1$ in the adjacency list
representation, initialized to $G(H)$. The algorithm consists of
four phases, and except the first one, each phase consists of
several stages, where essentially one stage corresponds to one {\sf
Propagation} process. In the case of an unsatisfiable $Hamiltonian$ the
algorithm at some point outputs ``$H$ is unsatisfiable" and stops.
This happens when either the maximal rank constraints are already
unsatisfiable, or at some later point several values are assigned to the same variable 
during a necessary propagation process. 

In the case of a frustration-free Hamiltonian,
at the beginning and end of each stage, we will
have $s_0=s_1$, f
and $G_0 =G_1 =G_{s_0}$. In the first two
phases only $(s_0, G_0)$ develops, and is copied to $(s_1, G_1)$ at the 
end of the phase. In the last two phases, $(s_0, G_0)$ and $(s_1, G_1)$ develop
independently, but only the result of one of the two processes is retained and is 
copied into the other variable at the end of the phase. This parallel development of the two processes
is necessary for complexity considerations, it
ensures that the useless work done is proportional to the useful work.

In the first phase the procedure {\sf MaxRankRemoval} satisfies, if this is 
possible, all constraints of maximal rank. In the second phase all these assignments
are propagated, which, if successful, results in a closed
assignment $s$ such that $H_s$ has only rank-1 constraints. In the
third phase the procedure {\sf ParallelPropagation} satisfies the product 
constraints one by one and propagates the assigned values. To
satisfy a product constraint, the only two possible choices are
tried and propagated in parallel. In the fourth phase the remaining
entangled constraints are taken care of, again, one by one. To
satisfy a constraint, an arbitrary value is tried and propagated. In
case of an unsuccessful propagation we are able to efficiently find
a product constraint implied by the entangled constraints
considered during the propagation, and therefore it becomes possible to
proceed as in phase three. In case of success we are left with a
satisfying assignment and the empty constraint graph.
Theorem~\ref{thm:intro} is an immediate consequence of the following result.

\begin{algorithm}
\caption{{\sf Q2SATSolver}$(G(H))$}
\label{alg:Q2SATfull}


\begin{algorithmic}[H]
	
	\State
	\State $s_0 = s_1 := \Empty, ~~ G_0 = G_1 := G(H)$  \Comment{Initialize global variables}
	
	\vspace{.5em} 
	\State {\sf MaxRankRemoval()} \Comment{Remove maximal rank constraints} 
	
	\vspace{.5em} 
\While  {there exist $i \in V_0$ such that $s(i) \neq \Empty$} \Comment{Propagate all assigned values} 
\State
{\sc Propagate}$(s_0, G_0, i, s_0(i))$ for some vertex $i$ in $G_0$ such that $s_0(i) \neq \Empty$
\State
{\bf if} the propagation returns ``unsuccessful"  {\bf output} ``$H$ is unsatisfiable"
\State
$s_1 := s_0, G_1 := G_0$
\EndWhile
	\vspace{.5em}	
	\While {there exists in $G_0$ a product edge 
	with constraint
	$\ket{\alpha_0^\bot}_{i_0} \otimes \ket{\alpha_1^\bot}_{i_1} \bra{\alpha_0^\bot}_{i_0}  \otimes \bra{\alpha_1^\bot}_{i_1} $}
	\State 
	{\sf ParallelPropagation}$(i_0,\ket{\alpha_0}, i_1,\ket{\alpha_1})$ 
	\Comment{Remove product constraints}
	\EndWhile


	\vspace{.5em} 
	\While {$G_0$ is not empty }  \Comment{Remove entangled constraints}
	\State 
	{\sf ProbePropagation}$(i)$ for some vertex $i$
	\EndWhile
	\State {\bf output} $\ket{s}$ for any total extension $s$ of $s_0$.
	

\end{algorithmic}
\end{algorithm}

\begin{Thm}\label{thm:main}
  Let $G(H) = (V,E)$ be the constraint graph of a $2$-local
  Hamiltonian. Then:
  \begin{enumerate}
    \item If $H$ is frustration-free, the algorithm 
      {\sf Q2SATSolver}$(G(H))$ outputs a ground state $\ket{s}$.
      
    \item If $H$ is not frustration-free, the algorithm 
      {\sf Q2SATSolver}$(G(H))$ outputs ``H is unsatisfiable''.
      
    \item The running time of the algorithm is $O(|V| + |E|)$.
  \end{enumerate}  	
\end{Thm}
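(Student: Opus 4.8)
The plan is to prove the three statements simultaneously by carrying a handful of invariants through the four phases of {\sf Q2SATSolver}, and then to charge the running time against edge deletions in the constraint graph. The invariant I would maintain is that, at the end of each stage, the global variables satisfy $s_0 = s_1$ and $G_0 = G_1 = G_{s_0}$, that $s_0$ is closed, and --- this is the crucial part --- that \emph{$s_0$ is a pre-solution whenever $H$ is frustration-free}. Since the empty assignment is closed and, in the YES case, trivially a pre-solution, it then suffices to check that each stage preserves these properties and that the algorithm prints ``$H$ is unsatisfiable'' only when $H$ has no solution at all. The equalities and the closedness come directly from the Propagation Lemma (Lemma~\ref{lem:propagate}), so the real content is the pre-solution invariant together with the soundness of the rejections.

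\textbf{Phases 1--2.} For {\sf MaxRankRemoval} I would argue that it runs in linear time and either detects that the maximal-rank constraints are jointly unsatisfiable --- whence $H$ is not frustration-free --- or outputs a coherent $s_0$ that every solution must extend; in particular $s_0$ is a pre-solution in the YES case. Phase 2 then repeatedly runs the propagation procedure on already-assigned vertices: each successful call preserves the pre-solution and closedness invariants by Lemma~\ref{lem:propagate}(1), while an ``unsuccessful'' return is sound by Lemma~\ref{lem:propagate}(2), since the value being propagated was forced. After Phase 2 every vertex of $V_0$ is unassigned and $H_{s_0}$ contains no rank-$3$ constraint (they were all satisfied in Phase 1 and propagation only deletes edges), so Theorem~\ref{thm:productstate} guarantees a product-state solution extending $s_0$ in the YES case, which is exactly the hypothesis needed to keep applying the Propagation Lemma in the remaining phases.

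\textbf{Phases 3--4.} A product constraint $\ket{\alpha_0^\bot}_{i_0}\otimes\ket{\alpha_1^\bot}_{i_1}\bra{\alpha_0^\bot}_{i_0}\otimes\bra{\alpha_1^\bot}_{i_1}$ is satisfied by a product state exactly when qubit $i_0$ carries $\ket{\alpha_0}$ or qubit $i_1$ carries $\ket{\alpha_1}$; {\sf ParallelPropagation} runs the two corresponding propagations in lockstep and keeps the first that terminates without contradiction. If $s_0$ is a pre-solution, a product-state solution extending it (Theorem~\ref{thm:productstate}) must take one of the two choices, and by the contrapositive of Lemma~\ref{lem:propagate}(2) that branch cannot fail, so the retained assignment is again a closed pre-solution; if both branches fail there is no product-state solution extending $s_0$, hence by Theorem~\ref{thm:productstate} $H$ is not frustration-free and the rejection is sound. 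Phase 4 is the main obstacle: when {\sf ProbePropagation}$(i)$ tries an arbitrary value at $i$ and propagation fails, one cannot simply ``flip'' it as in classical $2$SAT. Following Fig.~\ref{fig:contradicting-cycle}, the plan is to observe that a failure exposes a cycle through $i$ along which the propagated value returns changed; then, using the sliding technique of~\cite{JWZ11}, to show this cycle is equivalent --- preserving frustration-freeness and staying within rank $\le 2$ --- to a double edge with a ``tail''; then to invoke a structure lemma giving that at least one projector of the double edge is a product projector, which reduces the stage to the Phase-3 situation of two explicit candidates propagated in parallel. Verifying that sliding preserves satisfiability and rank, that the end configuration is precisely a double-edge-plus-tail, and the structure lemma for double edges are the delicate points; granting them, the invariants and the soundness of rejection propagate exactly as in Phase 3. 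When $G_0$ finally becomes empty, $s_0$ is a solution, so $\ket{s}$ for any total extension is a ground state, establishing statements 1 and 2.

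\textbf{Complexity.} For statement 3 I would charge each propagation call to the edges it deletes, which is $O(|E_s| - |E_{s'}|)$ by Lemma~\ref{lem:propagate}(3); since edges are only ever deleted, all successful propagations together cost $O(|E|)$, on top of $O(|V|+|E|)$ for building the adjacency-list representation and $O(1)$ per constraint in Phases 1 and 3. The one subtlety is the wasted work of a failing branch in a parallel pair: because the two processes are advanced in lockstep and halted as soon as one of them finishes, the failing branch performs at most as much work as the successful one, so it too is $O(|E|)$ overall; similarly, the sliding in Phase 4 must touch each slid edge $O(1)$ times, again amortized against edge deletions. Summing over the four phases gives $O(|V|+|E|) = O(n+m)$, which is Theorem~\ref{thm:intro}.
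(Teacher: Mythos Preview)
Your proposal is correct and follows essentially the same approach as the paper: carry the closed/pre-solution invariant through the four phases via Lemma~\ref{lem:maxrank}, the Propagation Lemma, Lemma~\ref{lem:parallel} and Lemma~\ref{lem:probe}, and bound the running time by a telescoping sum over deleted edges with the lockstep trick absorbing the wasted branch. The one place where your argument is under-specified is the Phase-4 complexity: it is not just the sliding but the entire \emph{failed} probe propagation whose cost must be absorbed, and the reason this works (which the paper makes explicit in the proof of Lemma~\ref{lem:probe}) is that every edge left in Phase~4 is entangled, so by Lemma~\ref{lem:propagate0} every edge explored by the failed probe will also propagate---and hence be deleted---in the successful branch of the ensuing {\sf ParallelPropagation}.
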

Theorem~\ref{thm:main} will be proven in
Section~\ref{subsec:analysis}.


\subsection{Max rank removal}
The {\sf MaxRankRemoval} procedure is conceptually very simple.
Since every maximal rank constraint has a unique solution (up to a
global phase), it makes this assignment for each constraint, and
then checks if this is globally consistent.

\begin{algorithm}
\caption{{\sf MaxRankRemoval}$()$}
\label{alg:R3R}

\begin{algorithmic}[H]
\State
{\bf for} {all $i \in V_0$ such that $\rank(\Pi_{ii}) = 1$ and $\ket{\phi}$ is the unique state satisfying $\Pi_{ii}$}
\State
~~~~$s_0(i) := \ket{\phi}$ 
\For {all $i \in V_0$, for all edge $e \in E_0$ from $i$ to $j$ such that $\rank(\Pi_{e}) = 3 $ } 
\If { \text{$\ket{\alpha}_i \ket{\beta}_j $ is the unique product 2-qubit state satisfying $\Pi_{e}$}}
\State 
~~~~$s_0(i) :=
\begin{cases}
\ket{\alpha} & {\rm if} ~ s_0(i) = \Empty\\
\Bad & {\rm if}~ s_0(i) \notin \{ \Empty , \ket{\alpha}\}
\end{cases}
$
\EndIf
\If {\text{$\ket{\gamma}$ is the unique entangled 2-qubit state satisfying $\Pi_{e}$}}
\State 
~~~~$s_0(i) :=
\begin{cases}
\ket{\gamma}_{ij} & {\rm if} ~ s_0(i) = \Empty\\
\Bad & {\rm if}~ s_0(i) \notin \{ \Empty , \ket{\gamma}_{ij}\}
\end{cases}
$

\EndIf
\EndFor
\vspace{.5em} 
\If {$s_0$ is not coherent for some $i \in V_s$ or for three
distinct variables $i,j,k$, we have $s(i) = \ket{\gamma}_{ik}$ and 
$\Pi_e$ is an entangled projector on $i$ and $j$}
\State
{\bf output} ``$H$ is unsatisfiable"
\EndIf
\vspace{.5em} 
\State
{\bf remove} from $E_0$ every edge $e$ such that $\Pi_e$ is satisfied by  $s_0.$
\State
{\bf remove} every isolated vertex from $G_0$ 

\State
$s_1 := s_0, ~~G_1 := G_0$
\end{algorithmic}
\end{algorithm}

\begin{Lem}
\label{lem:maxrank} 
  Let $s_0, G_0, s_1, G_1$ be the outcome of {\sf
  MaxRankRemoval}. Then:
  \begin{enumerate}
    \item If {\sf MaxRankRemoval} doesn't output 
      ``$H$ is unsatisfiable" then $s_0$ is coherent, it satisfies
      every maximal rank constraint, $G_0 = G(H_{s_0})$ and
      $s_0=s_1, G_0=G_1$. Moreover, if $H$ is satisfiable then $s_0$
      is a pre-solution.
      
    \item If {\sf MaxRankRemoval} outputs 
      ``$H$ is unsatisfiable" then indeed $H$ is unsatisfiable.
      
    \item The complexity of the procedure is $O(|V| + |E|) |$.
  \end{enumerate}
\end{Lem}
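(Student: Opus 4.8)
The plan is to verify each of the three claims by unwinding the pseudocode of {\sf MaxRankRemoval} and appealing to the structural results already established. For Statement~1, suppose the procedure does not output ``$H$ is unsatisfiable''. Every maximal rank constraint --- a single-qubit projector $\Pi_{ii}$ of rank~1, or a two-qubit projector of rank~3 --- has exactly a one-dimensional solution space, i.e.\ a unique solution up to a global phase: for $\Pi_{ii}=\ket{\phi^\perp}\bra{\phi^\perp}$ the only satisfying 1-qubit state is $\ket{\phi}$, and for a rank-3 projector $\Id-\ket{\chi}\bra{\chi}$ the solution is forced to be (a scalar multiple of) $\ket{\chi}$, which is either a product state $\ket{\alpha}\ket{\beta}$ or a genuinely entangled 2-qubit state $\ket{\gamma}$; in the product case \emph{both} qubits are pinned, in the entangled case the pair is pinned jointly. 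The loops record exactly these forced values into $s_0$, writing $\Bad$ whenever two different forced values collide on one variable. The explicit coherence/consistency test just before the output line rejects precisely when some $s_0(i)=\Bad$, or when a variable is assigned an entangled 2-qubit state $\ket{\gamma}_{ik}$ while it also carries another entangled rank-3 constraint to a third variable $j$ (a situation no single ground state can satisfy, since $\ket{\gamma}_{ik}$ is entangled across $\{i\}\mid\{k\}$ and cannot simultaneously be entangled across $\{i\}\mid\{j\}$). Hence if the test passes, $s_0$ is coherent and satisfies every maximal rank constraint by construction. That $G_0=G(H_{s_0})$ follows because the final two lines remove from $E_0$ exactly the edges whose projector is satisfied by $s_0$ and then delete isolated vertices --- which is the definition of the constraint graph of the reduced Hamiltonian $H_{s_0}$; the assignments $s_1:=s_0$, $G_1:=G_0$ give the last equalities.

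For the ``moreover'' part of Statement~1, assume $H$ is satisfiable and let $\ket{\gs}$ be any ground state; by Theorem~\ref{thm:productstate} we may take $\ket{\gs}$ to be a tensor product of 1-qubit states and of 2-qubit entangled states, the latter only on the qubit pairs supporting rank-3 projectors. I will argue this $\ket{\gs}$ is a total extension of $s_0$, which exhibits $s_0$ as a pre-solution. Indeed, wherever $s_0(i)=\ket{\phi}$ was set by a single-qubit constraint, $\ket{\gs}$ must carry $\ket{\phi}_i$ since that is the unique satisfying state; wherever $s_0(i)=\ket{\alpha}$ was set by a rank-3 constraint with product solution $\ket{\alpha}_i\ket{\beta}_j$, the ground state must contain $\ket{\alpha}_i\otimes\ket{\beta}_j$ tensored with the rest, again by uniqueness; and wherever $s_0(i)=\ket{\gamma}_{ij}$ was set by a rank-3 constraint with entangled solution, $\ket{\gs}$ must contain $\ket{\gamma}_{ij}$ tensored with the rest, by the very structure of Theorem~\ref{thm:productstate}. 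So every value of $s_0$ agrees with $\ket{\gs}$, giving a total extension; $s_0$ is a pre-solution.

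Statement~2 is the contrapositive of the uniqueness observations above: if the procedure outputs ``$H$ is unsatisfiable'', then either some $s_0(i)=\Bad$, meaning two distinct maximal rank constraints force incompatible (projectively distinct) values on qubit~$i$ --- no state can satisfy both --- or some qubit $i$ is forced into a genuinely entangled pair $\ket{\gamma}_{ik}$ by one rank-3 constraint while a second rank-3 constraint forces $i$ into an entangled pair with a different partner $j$; since a genuinely entangled bipartite state admits no further tensor factorization, no global state can satisfy both, so $H$ is not frustration-free. (The case where $i$ is pinned by a single-qubit constraint to $\ket{\phi}$ while also sitting in a rank-3 entangled constraint is caught by the $\Bad$ mechanism once that rank-3 constraint tries to write $\ket{\gamma}_{ik}$, as the product/entangled assignments all funnel through the same $s_0(i)$ slot with the collision-to-$\Bad$ rule.) Finally, Statement~3: initializing the globals is $O(|V|+|E|)$; the single-qubit loop touches each vertex once; the rank-3 loop inspects each edge a constant number of times and does $O(1)$ work per edge (computing the unique solution of a $4\times 4$ rank-3 projector, testing whether it is a product or entangled state, updating one $s_0$ entry); the coherence test is one pass over vertices (plus, for each vertex carrying an entangled assignment, a scan of its incident edges, charged to those edges); the edge-removal and isolated-vertex cleanup is one more pass over $E_0$ and $V_0$; copying $s_0,G_0$ into $s_1,G_1$ is $O(|V|+|E|)$. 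Summing gives $O(|V|+|E|)$.

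The main obstacle I anticipate is not any single calculation but pinning down the exact failure conditions in the pseudocode --- in particular making sure the written coherence test (the $\Bad$ flag together with the separate ``entangled $\ket{\gamma}_{ik}$ plus another entangled edge at $i$'' check) really captures \emph{all} ways a collection of maximal rank constraints can be jointly unsatisfiable, and no fewer. This amounts to a careful case analysis over whether each qubit is touched by zero, one, or several single-qubit and rank-3 constraints, and, for rank-3 ones, whether their forced state is product or entangled; the entangled-pair case is the delicate one because it constrains two qubits at once and interacts badly with any other constraint on either endpoint. Everything else is bookkeeping against the definitions of coherent assignment, reduced Hamiltonian, and pre-solution, together with the uniqueness of maximal rank solutions and Theorem~\ref{thm:productstate}.
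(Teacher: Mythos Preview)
Your argument for Statements~1 and~3 matches the paper's proof closely, and your use of Theorem~\ref{thm:productstate} to exhibit the total extension in the pre-solution clause is exactly right.

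There is, however, a genuine gap in your treatment of Statement~2. You read the second failure condition in the pseudocode as ``a second rank-3 constraint forces $i$ into an entangled pair with a different partner $j$,'' but that is not what the check says. In the paper's terminology an \emph{entangled projector} is a rank-1 projector $\ket{\psi}\bra{\psi}$ whose state $\ket{\psi}$ is entangled; the test fires whenever $s_0(i)=\ket{\gamma}_{ik}$ and there is \emph{any} entangled rank-1 edge between $i$ and some third vertex $j$ --- not only a rank-3 one. The case you describe (two rank-3 projectors with entangled solutions on overlapping pairs) is already caught by the $\Bad$ mechanism, since both would try to write conflicting values into $s_0(i)$; so under your reading the separate check would be redundant. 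The situation the check is really designed for is: qubit $i$ is pinned into the entangled pair $\ket{\gamma}_{ik}$ by a rank-3 projector, and separately there is an entangled rank-1 constraint $\ket{\psi}\bra{\psi}$ on $(i,j)$. Your factorization argument does not show this is unsatisfiable, because a single entangled rank-1 projector on $(i,j)$ does not by itself force any particular state on $i$. The paper closes this gap by invoking Lemma~\ref{lem:propagate2}: if a ground state has the form $\ket{\gamma}_{ik}\otimes\ket{rest}$ with $\ket{\gamma}$ entangled, then any rank-1 projector on $(i,j)$ that is satisfied must be a \emph{product} projector; hence an entangled one cannot be satisfied and $H$ is unsatisfiable. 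You should replace your tensor-factorization sentence with this appeal to Lemma~\ref{lem:propagate2}.

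A minor related point: your anticipated obstacle --- verifying that the coherence test captures \emph{all} joint unsatisfiability of maximal rank constraints --- is not actually needed for the lemma. Statement~2 only requires soundness (if the test fires, $H$ is unsatisfiable); completeness of the test is irrelevant here, since Statement~1 is conditional on $H$ being satisfiable.
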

\begin{proof}
If the procedure doesn't output ``$H$ is unsatisfiable" then indeed
$s_0$ is coherent and it satisfies all maximal rank constraints. The
removal of the necessary edges and vertices insures that $G_0 =
G(H_{s_0})$, and obviously $s_0=s_1, G_0=G_1$. If $H$ is
satisfiable, then it has a ground state for some total assignment
$s$. This $s$ is an extension of $s_0$ because there is a unique way
to satisfy the maximal rank constraints.

Maximal rank projectors are such that there is a unique assignment
for their qubit(s) which satisfies them. The first part of the
procedure creates the assignment which assigns these necessary
values. If this assignments is not coherent then $H$ is
unsatisfiable. 
Similarly, if $s_0$ assigns an entangled 2-qubit state between variables $i$ and $k$, and 
there is an entangled rank-1 constraint between $i$ and  $j$, then by Lemma~\ref{lem:propagate2}
it is impossible to extend $s_0$ into a satisfying assignment,
and therefore $H$ is unsatisfiable. This proves
Statement 2.

The procedure can be executed by a constant number of vertex and edge traversals for $s_0$, and
similarly for $s_1$.
\end{proof}



\subsection{Algorithm {\sf ParallelPropagation}}\label{sec:Prop2}

The procedure {\sf ParallelPropagation} is called when $s_0$ is a
closed assignment, and in $G_{s_0}$ there is a product edge. Since
there are only two ways to satisfy a product constraint, these are
tried and propagated in parallel. If one of these propagations
terminates successfully, the other is stopped, which ensures that
the overall work done is proportional to the progress made.

\begin{algorithm}
\caption{{\sf ParallelPropagation}$(i_0, \ket{\alpha_0}, i_{1}, \ket{\alpha_{1}})$}
\label{alg:Q2SATdotted}
\begin{algorithmic}[H]
	\State  {\bf Run} in parallel  {\sf Propagation}$(s_0, G_0, i_0, \ket{\alpha_0})$ 
	and {\sf Propagation}$(s_1, G_1, i_1, \ket{\alpha_1})$ 
	\State {\bf until} one of them terminates successfully {\bf or} both terminate unsuccessfully
	
	\medskip
         \If {both propagations terminate unsuccessfully}
         \State {\bf output} ``$H$ is unsatisfiable'' 
	\Else ~~~let {\sf Propagation}$(s_0, G_0, i_0, \ket{\alpha_0})$ terminate first (the other case is symmetric)
	\State {\bf undo} {\sf Propagation}$(s_1, G_1, i_1, \ket{\alpha_1})$
	\State
	$s_1 := s_0, ~G_1 :=G_0$
	\EndIf	
	
\end{algorithmic}
\end{algorithm}

\begin{Lem}
\label{lem:parallel} 

  Let {\sf ParallelPropagation} be called when $s_0$ is closed,
  $H_{s_0}$ doesn't have rank-$3$ constraints, $G_0=G_{s_0}$, in
  $G_0$ there exists a product edge from $i_0$ to $i_1$ with
  constraint $\ket{\alpha_0^\bot}_{}\otimes \ket{\alpha_1^\bot}_{}$, and $s_1
  = s_0, G_1 = G_0$.  Let $s'_0, s'_1, G'_0 , G'_1$ be the outcome
  of the procedure. Then:
  \begin{enumerate}
    \item If {\sf ParallelPropagation} doesn't output 
      ``$H$ is unsatisfiable'' then $s'_0$ is a proper closed
      extension of $s_0$, $G'_0 = G_{s'_0}$, and $s'_1=s'_0$,
      $G'_1=G'_0$. Moreover, if $s$ is a pre-solution then $s'_0$ is
      a pre-solution.
    
    \item If {\sf ParallelPropagation} outputs 
      ``$H$ is unsatisfiable'' then indeed $H$ is unsatisfiable.
    
    \item The complexity of the procedure is 
      $O( |E_{s_0} |  -  |E_{s'_0}|)$.
  \end{enumerate}
\end{Lem}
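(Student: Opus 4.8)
The plan is to derive all three statements from the Propagation Lemma (Lemma~\ref{lem:propagate}), applied separately to the two calls {\sf Propagation}$(s_0,G_0,i_0,\ket{\alpha_0})$ and {\sf Propagation}$(s_1,G_1,i_1,\ket{\alpha_1})$, together with one elementary observation about product constraints. First I would verify that the hypotheses of Lemma~\ref{lem:propagate} hold for each call: $H_{s_0}=H_{s_1}$ has no rank-$3$ constraint by assumption, and since $s_0$ is closed while $i_0,i_1$ are endpoints of an edge of $G_0=G_{s_0}$, both lie in $V_{s_0}$, so $s_0(i_0)=s_1(i_1)=\Empty$. The observation is that the product constraint $\Pi_e$, whose violated direction is the product state $\ket{\alpha_0^\bot}_{i_0}\otimes\ket{\alpha_1^\bot}_{i_1}$, annihilates a product state $\ket{x}_{i_0}\otimes\ket{y}_{i_1}$ exactly when $\qip{\alpha_0^\bot}{x}=0$ or $\qip{\alpha_1^\bot}{y}=0$, i.e.\ when $\ket{x}\propto\ket{\alpha_0}$ or $\ket{y}\propto\ket{\alpha_1}$.

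For Statement~1, if the procedure does not reject then one of the two runs finishes successfully; take the one that finishes first, say {\sf Propagation}$(s_0,G_0,i_0,\ket{\alpha_0})$ (the symmetric case is identical). By Lemma~\ref{lem:propagate}(1) the pair $(s_0,G_0)$ is replaced by $\big(\extt^{\pro}_{s_0}(i_0,\ket{\alpha_0}),G_{s'_0}\big)$, so $G'_0=G_{s'_0}$; the same statement gives that $s'_0$ is closed (because $s_0$ is) and a pre-solution whenever $s_0$ is. It is a \emph{proper} extension since $i_0\in V_{s_0}\setminus\supp(s_0)$ but $i_0\in\supp(s'_0)$. Finally the code undoes the losing run, restoring $(s_1,G_1)$ to the common pre-call pair $(s_0,G_0)$, and then copies $(s_0,G_0)$ into $(s_1,G_1)$, giving $s'_1=s'_0$ and $G'_1=G'_0$.

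For Statement~2, suppose both runs fail, so the output is ``$H$ is unsatisfiable''. By Lemma~\ref{lem:propagate}(2) there is no solution extending $s_0$ that assigns $\ket{\alpha_0}$ to $i_0$, and none that assigns $\ket{\alpha_1}$ to $i_1$. Assume for contradiction that $s_0$ is a pre-solution; then $H_{s_0}$ is frustration-free, and since it has no rank-$3$ constraint, Theorem~\ref{thm:productstate} provides a solution $z$ extending $s_0$ whose values on $V_{s_0}$ are single-qubit states. In particular $z(i_0)$ and $z(i_1)$ are single-qubit states that satisfy $\Pi_e$, so by the observation above $z(i_0)\propto\ket{\alpha_0}$ or $z(i_1)\propto\ket{\alpha_1}$ --- contradicting one of the two facts above. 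Hence $s_0$ is not a pre-solution, and combined with the invariant (maintained throughout the algorithm and established in the overall analysis) that $s_0$ is a pre-solution whenever $H$ is frustration-free, this shows $H$ is unsatisfiable. For Statement~3, I would run the two calls in lockstep, one unit of work per round each, so that when the winner finishes the loser has done no more work; by Lemma~\ref{lem:propagate}(3) the winner costs $O(|E_{s_0}|-|E_{s'_0}|)$, undoing the loser costs at most a constant times its own work, and the reassignments $s_1:=s_0$, $G_1:=G_0$ are $O(1)$ pointer operations, for a total of $O(|E_{s_0}|-|E_{s'_0}|)$.

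The step I expect to be the main obstacle is the pre-solution bookkeeping shared by Statements~1 and~2: a {\sf Propagation} run can succeed even when its seed state is not the one used by any genuine solution, so in Statement~1 one must lean on the (already established) fact that a successful propagation starting from a pre-solution again leaves a pre-solution, while in Statement~2 one must use the product structure of $\Pi_e$ to rule out \emph{every} extension of $s_0$, not just the two seeds that were probed. Everything else is a routine combination of the Propagation Lemma with the closedness hypothesis and the product-state theorem.
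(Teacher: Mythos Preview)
Your argument tracks the paper's almost exactly: Statement~1 via the Propagation Lemma and closedness, Statement~2 via the product-state theorem together with the dichotomy for product constraints, and Statement~3 via lockstep execution bounding the loser by the winner. Two small points deserve correction.

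For Statement~2, concluding ``$s_0$ is not a pre-solution'' and then appealing to the global invariant (``$s_0$ is a pre-solution whenever $H$ is frustration-free'') makes the proof depend on the overall analysis rather than stand on the lemma's stated hypotheses. The paper's argument is self-contained: it assumes directly that $H$ is frustration-free, observes that $H_{s_0}$, being a sub-sum of the projectors of $H$, is then frustration-free as well, and runs your contradiction from there (product ground state of $H_{s_0}$ forces $\ket{\alpha_0}$ at $i_0$ or $\ket{\alpha_1}$ at $i_1$, both excluded by Lemma~\ref{lem:propagate}(2)). This is literally a one-line rephrasing of what you already wrote, but it removes the forward reference.

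For Statement~3, the claim that ``$s_1:=s_0,\ G_1:=G_0$ are $O(1)$ pointer operations'' does not work: the two copies must be able to diverge independently in the next call to {\sf ParallelPropagation}, so aliasing is ruled out, and a genuine copy would cost $O(n+m)$. The correct accounting (which the paper in fact defers to the proof of Theorem~\ref{thm:main}) is to re-execute the winning propagation on the restored $(s_1,G_1)$; this costs another $O(|E_{s_0}|-|E_{s'_0}|)$ and leaves the overall bound intact.
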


\begin{proof}
  If the procedure doesn't output ``$H$ is unsatisfiable" then at
  least one of the parallel propagations terminates successfully,
  say {\sf Propagation}$(s_0, G_0, i_0, \ket{\alpha_0})$. Then $s'_0$ is a proper
  extension of $s_0$ since $s_0$ is closed and therefore $s_0(i_0) =
  \Empty$. Obviously $s'_1=s'_0$ and $ G'_1=G'_0$, and all other
  claims follow from the Propagation Lemma.

  Since $H_{s_0}$ doesn't have rank-3 constraints, by
  Theorem~\ref{thm:productstate} if it is frustration free, it has
  a product ground state. In $H_{s_0}$ there exists a product
  edge from $i_0$ to $i_1$ with constraint $\ket{\alpha_0^\bot}_{i_0}
  \otimes \ket{\alpha_1^\bot}_{i_1} \bra{\alpha_0^\bot}_{i_0} \otimes
  \bra{\alpha_1^\bot}_{i_1} $, therefore only assignments which
 have either $\ket{\alpha_0}$ assigned to variable $i_0$ or
  $\ket{\alpha_1}$ assigned to variable $i_1$ can be a solution. But if both
  propagations output ``unsuccessful", then by the Propagation Lemma
  no such assignment can satisfy $H_{s_0}$. Therefore
  $H_{s_0}$ is not frustration free, and neither is $H$.

  For the complexity analysis observe that the unsuccessful or
  unterminated propagation of the parallel processes makes at most
  as many steps as the successful one. This is the reason for 
  performing the two propagations in parallel. Undoing this
  propagation can be performed in the same order of time as the
  propagation itself, for example, by copying the removed edges into
  temporary lists. The claim on the complexity of the successful
  propagation follows from the Propagation Lemma.
\end{proof}


\subsection{Algorithm {\sf ProbePropagation}}\label{sec:Prop1}

The procedure {\sf ProbePropagation} is evoked when $s_0$ is a
closed assignment, and in $G_{s_0}$ there are only entangled
constraints. It picks an arbitrary vertex in $i \in V_s$, assigns
$\ket{0}$ (an arbitrary value) to it, and propagates this choice. In
the lucky case of successful propagation this is repeated.
Otherwise, we reach a contradiction: there is some $j \in V_s$,
such that two propagating paths assign different values to it. We
prove below the Sliding Lemma which already appeared in \Ref{JWZ11}.
It implies that when $ i_0\to i_1\to \ldots \to i_k $ is a propagating
path of entangled rank-1 projectors, the ground space of the
Hamiltonian $\Pi_{i_0,i_1}+\Pi_{i_1,i_2} + \ldots +
\Pi_{i_{k-1},i_k}$ is equal to the ground state of the Hamiltonian
$\Pi_{i_0,i_k} + \Pi_{i_1,i_2} + \ldots + \Pi_{i_{k-1},i_k}$, where
$\Pi_{i_0,i_k}$ is a new projector defined on the qubits $(i_0,i_k)$
that replaces the projector $\Pi_{i_0,i_1}$. Graphically, this can
be viewed as if we are sliding the $(i_0,i_1)$ edge on the path
$i_1\to \ldots \to i_k$, as shown in Fig.~\ref{fig:sliding-path}.
Therefore, if we have two propagating paths starting at $i$ and
ending at $j$, they define two projectors on qubits $(i,j)$, as
illustrated in Fig.~\ref{fig:contradicting-cycle}. As we
shall see, if these two paths are contradicting then necessarily the
two projectors are different, which by Lemma~\ref{lem:rank2product}
implies the existence of a product constraint between $(i_0,i_k)$
variables. In such case, we can proceed by calling the procedure
{\sf ProbePropagation}.

\begin{figure}
  \begin{center}
    \includegraphics[scale=1.0]{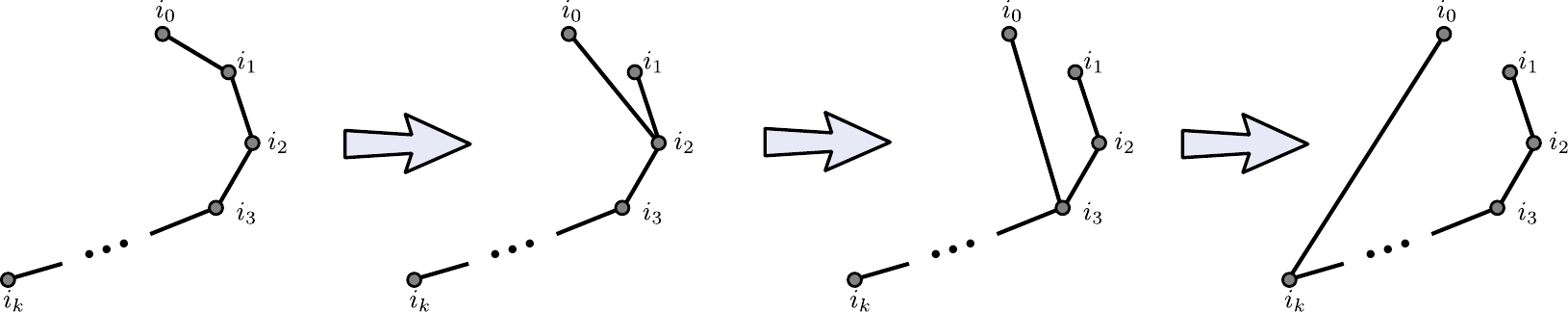}  
  \end{center}
  \caption{\label{fig:sliding-path} The sliding of the edge
  $(i_0,i_1)$ over the path $i_1\to i_2\to \ldots \to i_k$, until it
  becomes the edge $(i_0,i_k)$.}
\end{figure}

\begin{Lem}[Sliding Lemma]
\label{lem:sliding} Consider a system on $3$ qubits $i,j$ and $k$.
  Suppose that we have a two rank-$1$ constraints $\Pi_1 =
  \ket{\psi_1}\bra{\psi_1}_{ij}$ on qubits $(i,j)$ and $\Pi_2 =
  \ket{\psi_2}\bra{\psi_2}_{jk}$ on qubits $(j,k)$. If
  $\ket{\psi_2}$ is entangled, there is another rank-$1$
  constraint $\Pi_3 = \ket{\psi_3}\bra{\psi_3}_{ik}$ on qubits
  $(i,k)$ such that the ground space of $\Pi_1+\Pi_2$ 
  is identical to the ground space of $\Pi_2+\Pi_3$.
  In addition, if a single qubit
  state $\ket{\alpha}_i$ is propagated by $\Pi_1+\Pi_2$ to
  $\ket{\beta}_k$, then it is also propagated to $\ket{\beta}_k$
  directly via $\Pi_3$.
\end{Lem}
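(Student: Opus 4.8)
The plan is to reduce everything to the standard $2\times 2$ matrix description of rank-$1$ two-qubit projectors and to obtain $\Pi_3$ by \emph{transporting} $\Pi_1$ through the propagation map of $\Pi_2$. A rank-$1$ constraint $\ket{\psi}\bra{\psi}$ on a pair of qubits is encoded by the $2\times 2$ coefficient matrix of $\ket{\psi}$, and by Lemma~\ref{lem:propagate0} that matrix is non-singular exactly when $\ket{\psi}$ is entangled. So, since $\ket{\psi_2}$ is entangled, $\Pi_2$ propagates \emph{every} single-qubit state of $j$; because the defining condition $\Pi_2(\ket{\gamma}_j\otimes\ket{\delta}_k)=0$ is bilinear, this propagation is (projectively) linear, given by an invertible linear map $A_2\colon\Qb_j\to\Qb_k$ with $\ket{\gamma}_j\otimes(A_2\ket{\gamma})_k\in\ker\Pi_2$ for every $\ket{\gamma}$.

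First I would record a structural fact about $\ker\Pi_2$, viewed as a subspace of $\Qb_j\otimes\Qb_k$: it equals $(\Id\otimes A_2)$ applied to the $3$-dimensional symmetric subspace $\mathrm{Sym}^2(\mbC^2)$. Indeed, each product state $\ket{\gamma}\otimes A_2\ket{\gamma}=(\Id\otimes A_2)(\ket{\gamma}\otimes\ket{\gamma})$ lies in $\ker\Pi_2$; the span of these is $(\Id\otimes A_2)\mathrm{Sym}^2(\mbC^2)$ because $\{\ket{\gamma}\otimes\ket{\gamma}\}$ spans $\mathrm{Sym}^2(\mbC^2)$; and this span is $3$-dimensional, which matches $\dim\ker\Pi_2$. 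Then I would \emph{define} the new constraint by $\ket{\psi_3}:=(\Id_i\otimes(A_2^{-1})^\dagger)\ket{\psi_1}$ (renormalized), with $\Pi_3=\ket{\psi_3}\bra{\psi_3}$ on qubits $(i,k)$. Note that this is automatically a legitimate rank-$1$ constraint, and that if $\ket{\psi_1}$ happens to be a product state $\ket{x}_i\ket{y}_j$ then so is $\ket{\psi_3}$ — so no case split between entangled and product $\Pi_1$ is needed.

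The ground-space identity would then follow from the invertible (though non-unitary) change of variables $U:=\Id_i\otimes\Id_j\otimes A_2^{-1}$, applied to a generic $3$-qubit state $\ket{\Psi}$. Writing $\ket{\Psi'}=U\ket{\Psi}$, the structural fact gives $\Pi_2\ket{\Psi}=0\Leftrightarrow\ket{\Psi'}\in\Qb_i\otimes\mathrm{Sym}^2(\mbC^2)$; since $U$ acts only on the $k$ factor while $\Pi_1$ acts only on $i,j$, they commute and $\Pi_1\ket{\Psi}=0\Leftrightarrow\Pi_1^{(i,j)}\ket{\Psi'}=0$; and a short computation from the definition of $\ket{\psi_3}$ (using $\bra{\psi_3}=\bra{\psi_1}(\Id_i\otimes A_2^{-1})$) gives $\Pi_3\ket{\Psi}=0\Leftrightarrow\Pi_1^{(i,k)}\ket{\Psi'}=0$, where $\Pi_1^{(i,j)}$ and $\Pi_1^{(i,k)}$ denote $\ket{\psi_1}\bra{\psi_1}$ placed on qubits $i,j$ and on qubits $i,k$. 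Finally, because $\ket{\Psi'}$ is symmetric under exchanging the $j$ and $k$ tensor factors and the associated swap $W$ satisfies $W\Pi_1^{(i,j)}W=\Pi_1^{(i,k)}$, we get $\Pi_1^{(i,j)}\ket{\Psi'}=0\Leftrightarrow\Pi_1^{(i,k)}\ket{\Psi'}=0$; chaining these equivalences shows the ground spaces of $\Pi_1+\Pi_2$ and $\Pi_2+\Pi_3$ coincide. The ``in addition'' clause falls out of the same computation: if $\ket{\alpha}_i$ is propagated by $\Pi_1$ to $\ket{\gamma}_j$ and then by $\Pi_2$ to $\ket{\beta}_k$, then $A_2^{-1}\ket{\beta}\propto\ket{\gamma}$, hence $\bra{\psi_3}(\ket{\alpha}_i\otimes\ket{\beta}_k)=\bra{\psi_1}(\ket{\alpha}_i\otimes\ket{\gamma}_j)=0$, so $\Pi_3$ propagates $\ket{\alpha}_i$ directly to $\ket{\beta}_k$.

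The step I expect to be the main obstacle is carrying out the non-unitary change of variables cleanly: pinning down the identity $\ker\Pi_2=(\Id\otimes A_2)\mathrm{Sym}^2(\mbC^2)$, and then checking that every ``$\Leftrightarrow$'' in the chain genuinely survives — this requires tracking adjoints carefully (for instance relating $\bra{\psi_3}$ to $\bra{\psi_1}$ via $A_2^{-1}$ rather than $A_2^\dagger$), since $A_2$ and $U$ are invertible but not isometric. A smaller point to handle with care is the degenerate sub-case in which $\ket{\psi_1}$, and hence $\ket{\psi_3}$, is a product projector: there ``propagation'' is only partial, so the propagation clause must be read as conditional on $\ket{\alpha}_i$ actually being propagated by $\Pi_1$ (equivalently $\ket{\alpha}_i\not\propto\ket{x^\perp}_i$), which the construction of $\Pi_3$ respects.
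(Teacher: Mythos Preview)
Your proposal is correct and is essentially the same argument as the paper's: both reduce to the swap-symmetric situation via a local invertible (non-unitary) change of variables that sends $\ker\Pi_2$ to $\mathrm{Sym}^2(\mbC^2)$, then exchange $\Pi_1^{(i,j)}$ with $\Pi_1^{(i,k)}$ by swap invariance. The only cosmetic difference is the parameterization: the paper acts on qubit $j$ with a map $T$ built from the Schmidt basis of $\ket{\psi_2}$ (so that $T\ket{\psi_2}$ becomes the antisymmetric state), whereas you act on qubit $k$ with $A_2^{-1}$ built directly from the propagation map---your choice makes the definition $\ket{\psi_3}=(\Id\otimes(A_2^{-1})^\dagger)\ket{\psi_1}$ and the propagation clause especially transparent.
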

\begin{proof}
  Consider the Schmidt decomposition 
  $\ket{\psi_2}_{jk} = \lambda_1 \ket{x_1}_j\ket{y_1}_k +
  \lambda_2\ket{x_2}_j\ket{y_2}_k$, where 
  $\lambda_1,\lambda_2\neq 0$, as $\ket{\psi_2}_{jk}$ is entangled.
  Define a non-singular transformation $T$ on qubit $j$ by
  $\lambda_1 T\ket{x_1} = \ket{y_2}$ and $\lambda_2T\ket{x_2} =
  -\ket{y_1}$. Then $T\ket{\psi_2}_{jk} =
  \ket{y_2}_j\ket{y_1}_k - \ket{y_1}_j\ket{y_2}_k$ is the
  anti-symmetric state. Let $\ket{\tilde{\psi}_1}_{ij}$ and 
  $\ket{\tilde{\psi}_2}_{jk}$ be the normalization of
  $T\ket{\psi_1}_{ij}$ and $T\ket{\psi_2}_{jk}$ respectively, and
  use them to define the rank-$1$ projectors
  $\tilde{\Pi}_1,\tilde{\Pi}_2$. Since $\tilde{\Pi}_2$ projects into
  the anti-symmetric subspace, then any state in the ground space of
  $\tilde{\Pi}_1+\tilde{\Pi}_2$ must be invariant under a swapping of
  qubits $j,k$. Therefore, definining
 $\ket{\psi_3}_{ik}=\ket{\tilde{\psi}_1}_{ik}$, and
  $\Pi_3=\Id-\ket{\psi_3}_{ik} \bra{\psi_3}_{ik}$, the the ground
  space of $\tilde{\Pi}_1+\tilde{\Pi}_2$ is identical to the ground
  space of $\Pi_3+\tilde{\Pi}_2$. Applying now the inverse
  transformation $T^{-1}$ on qubit $j$, the projector
  $\tilde{\Pi}_2$ returns to $\Pi_2$, while $\Pi_3$ remains
  unchanged. Since both $T,T^{-1}$ are non-singular, it
  follows that ground space of $\Pi_1+\Pi_2$ is identical to the
  ground space to $\Pi_2+\Pi_3$. 
  
  For the second claim, assume by contradiction that $\Pi_3$ does
  not propagate $\ket{\alpha}_i$ to $\ket{\beta}_k$. Then there is a
  1-qubit state $\ket{\gamma}\neq \ket{\beta}$, such that
  $\Pi_3(\ket{\alpha}_i\ket{\gamma}_k)=0$. Since $\Pi_2$ is a rank-1
  entangled projector, by Lemma~\ref{lem:propagate0}, it propagates
  $\ket{\gamma}_k$ to some state $\ket{\delta}_j$, and therefore the
  state $\ket{\alpha}_i\ket{\delta}_j\ket{\gamma}_k$ is a ground
  state of $\Pi_2+\Pi_3$, as well as of $\Pi_1+\Pi_2$. But this
  contradicts the assumption that latter propagates 
  $\ket{\alpha}_i$ to $\ket{\beta}_k$.  
\end{proof}

Applying Lemma~\ref{lem:sliding} iteratively, we reach the following
corollary
\begin{Cor}
\label{cor:sliding} Let $H = \sum_{e\in I} H_e$ be a $2$-local
  Hamiltonian in rank-$1$ decomposition.  Let $ i_0, i_1, \ldots
  i_k$ be vertices in $V$, and let $e_j$ be an edge from $i_j$ to
  $i_{j+1}$, for $j = 0, \ldots, k-1$ such that the rank-$1$
  constraints $\Pi_{e_j}$ are entangled. Then there exists a
  $2$-qubit entangled state $\ket{\gamma}$ between $i_0$ and $i_k$
  such that the ground space of $\sum_{j=0}^{k-1} \Pi_{e_j}$ is
  identical to the ground space of $\sum_{j=1}^{k-1} \Pi_{e_j} +
  \ket{\gamma}\bra{\gamma}_{i_0,i_k}$.  Moreover, if
  $\ket{\alpha}_{i_0}$ is propagated to $\ket{\beta}_{i_k}$ along
  the path, then it is also propagated directly by
  $\ket{\gamma}\bra{\gamma}_{i_0,i_k}$.
\end{Cor}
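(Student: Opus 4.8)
The plan is to apply the Sliding Lemma (Lemma~\ref{lem:sliding}) repeatedly, sliding the leftmost edge of the path one step at a time along the remaining path, and then argue by induction on the length $k$ of the path that the ground space is preserved at each step and that the final single replaced projector propagates $\ket{\alpha}_{i_0}$ to $\ket{\beta}_{i_k}$ exactly as the whole path did. Formally, I would induct on $k$. The base case $k=1$ is trivial: the ``path'' is the single edge $e_0$, and $\sum_{j=0}^{0}\Pi_{e_j}=\Pi_{e_0}=\ket{\gamma}\bra{\gamma}_{i_0,i_1}$ already (the sum $\sum_{j=1}^{0}$ being empty), and the propagation statement is immediate.

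For the inductive step, suppose the corollary holds for paths of length $k-1$. Given a path $i_0,i_1,\ldots,i_k$ with all $\Pi_{e_j}$ entangled rank-$1$ constraints, first apply Lemma~\ref{lem:sliding} to the pair $\Pi_{e_0}=\ket{\psi_1}\bra{\psi_1}_{i_0 i_1}$ and $\Pi_{e_1}=\ket{\psi_2}\bra{\psi_2}_{i_1 i_2}$ (here $\ket{\psi_2}$ is entangled, so the lemma applies), obtaining a rank-$1$ projector $\Pi'=\ket{\psi_3}\bra{\psi_3}_{i_0 i_2}$ such that the ground space of $\Pi_{e_0}+\Pi_{e_1}$ equals that of $\Pi_{e_1}+\Pi'$, and such that whatever $\Pi_{e_0}+\Pi_{e_1}$ propagates, $\Pi'$ propagates directly. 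Since adding the (untouched) remaining terms $\Pi_{e_2}+\cdots+\Pi_{e_{k-1}}$ to both sides preserves equality of ground spaces, the ground space of $\sum_{j=0}^{k-1}\Pi_{e_j}$ equals that of $\Pi'+\sum_{j=1}^{k-1}\Pi_{e_j}$. One small point to check: $\ket{\psi_3}$ is entangled. This follows because $T$ in the proof of Lemma~\ref{lem:sliding} is a non-singular local transformation on qubit $i_1$, which does not affect the $i_0$-factor, so $\ket{\psi_3}=\ket{\tilde\psi_1}_{i_0 i_2}$ is entangled iff $\ket{\psi_1}_{i_0 i_1}$ was; but actually we need a slightly different argument since $\ket{\psi_3}$ lives on $(i_0,i_2)$ — nevertheless, entanglement of $\ket{\psi_3}$ is guaranteed by the fact that $\Pi'$ propagates every $1$-qubit state (a consequence of the second claim of the Sliding Lemma together with Lemma~\ref{lem:propagate0}), so it cannot be a product projector. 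Now the path $i_0,i_2,i_3,\ldots,i_k$ (with first edge the new $\Pi'$ between $i_0$ and $i_2$, then $e_2,\ldots,e_{k-1}$) is a propagating path of length $k-1$ all of whose constraints are entangled rank-$1$ projectors; apply the induction hypothesis to it.

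The induction hypothesis yields a $2$-qubit entangled state $\ket{\gamma}_{i_0 i_k}$ with the ground space of $\Pi'+\sum_{j=2}^{k-1}\Pi_{e_j}$ equal to that of $\sum_{j=2}^{k-1}\Pi_{e_j}+\ket{\gamma}\bra{\gamma}_{i_0 i_k}$, and with the propagation property along this shorter path. Chaining the two ground-space identities gives that the ground space of $\sum_{j=0}^{k-1}\Pi_{e_j}$ equals that of $\sum_{j=1}^{k-1}\Pi_{e_j}+\ket{\gamma}\bra{\gamma}_{i_0 i_k}$, as required. For the propagation claim, suppose $\ket{\alpha}_{i_0}$ is propagated along the original path to $\ket{\beta}_{i_k}$; by the second part of Lemma~\ref{lem:sliding} the prefix $\Pi_{e_0}+\Pi_{e_1}$ propagates $\ket{\alpha}_{i_0}$ through $i_1$ to the same state at $i_2$ that $\Pi'$ does directly, so the shortened path (starting with $\Pi'$) also propagates $\ket{\alpha}_{i_0}$ to $\ket{\beta}_{i_k}$; then the induction hypothesis's propagation claim gives that $\ket{\gamma}\bra{\gamma}_{i_0 i_k}$ propagates $\ket{\alpha}_{i_0}$ directly to $\ket{\beta}_{i_k}$.

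The main obstacle I anticipate is bookkeeping: being careful that ``ground space'' identities compose correctly when extra untouched terms are present (this is routine since $\text{ground space}(A+C)$ depends monotonically and correctly on $\text{ground space}(A)$ when $A,C$ are sums of projectors and the identity is between the null spaces of $A$ and $A'$), and that the notion of a ``propagating path'' is stable under replacing a two-edge prefix by the slid edge — which again follows directly from the second clause of Lemma~\ref{lem:sliding} together with Lemma~\ref{lem:propagate0} guaranteeing that the entangled $\Pi'$ indeed propagates the state in question. Everything else is a direct unwinding of the two lemmas already proven.
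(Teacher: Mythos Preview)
Your proposal is correct and follows essentially the same approach as the paper, which simply states that the corollary follows by applying Lemma~\ref{lem:sliding} iteratively; you have spelled out the induction that this iteration amounts to, including the check that the slid projector $\Pi'$ remains entangled so the induction hypothesis applies. The only minor omission is that when chaining the two ground-space identities you must add $\Pi_{e_1}$ back to both sides of the identity obtained from the induction hypothesis, but this is the routine bookkeeping you already flag in your last paragraph.
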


We will denote the state $\ket{\gamma}$ in the conclusion of the
corollary by $\slide(p)$.

\begin{algorithm}
\caption{{\sf ProbePropagation}$(i)$}
\label{alg:prop1}
\begin{algorithmic}[H]
	\State {\sf Propagation}$(s_0, G_0, i, \ket{0})$. 

	\If {the propagation is successful}
		$s_1 := s_0$, $G_1 := G_0$
	\Else 
	\State Let $j$ such that $|s_0(j)| > 1$
	\State {\bf find} two paths $p_1$ and $p_2$ in $G_0$ from $i$
	to $j$ such that $\pro(s_0, p_1, \ket{0}) \neq \pro(s_0, p_2, \ket{0})$
		
	\State {\bf find} a product state $\ket{\alpha^\bot}_i \otimes \ket{\beta^\bot}_j$ in the two dimensional space
	$\spa\{\slide(p_1), \slide(p_2)\}$
		\State {\bf undo} {\sf Propagation}$(s_0, G_0,i, \ket{0})$
		\State {\sf ParallelPropagation$(i,\ket{\alpha}, j,\ket{\beta})$}
	\EndIf
\end{algorithmic}
\end{algorithm}

\begin{Lem}
\label{lem:probe} 
  Let {\sf ProbePropagation} be called when $s_0 $ is closed,
  $H_{s_0}$ has only rank-$1$ entangled constraints,
  $G_0=G_{s_0}$, and $s_1 = s_0, G_1 = G_0$.  Let $s'_0, s'_1,
  G'_0 , G'_1$ be the outcome of the procedure. Then:
  \begin{enumerate}
    \item If {\sf ProbePropagation} doesn't output 
    ``$H$ is unsatisfiable'' then $s'_0$ is a proper closed
    extension of $s_0$, $G'_0 = G_{s'_0}$, and $s'_1=s'_0$,
    $G'_1=G'_0$. Moreover, if $s$ is a pre-solution then $s'_0$ is a
    pre-solution.
    
    \item If {\sf ParallelPropagation} outputs 
      ``$H$ is unsatisfiable'' then indeed $H$ is unsatisfiable.
    
    \item The complexity of the procedure is $O( |E_{s_0} |  -  |E_{s'_0}|)$.
  \end{enumerate}
\end{Lem}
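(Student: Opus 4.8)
The plan is to follow the structure of the algorithm {\sf ProbePropagation}, splitting into the successful and unsuccessful cases, and to reduce everything to the three lemmas we already have at hand: the Propagation Lemma (Lemma~\ref{lem:propagate}), Lemma~\ref{lem:parallel} ({\sf ParallelPropagation}), and Corollary~\ref{cor:sliding} (iterated sliding). First I would handle the easy branch: if {\sf Propagation}$(s_0,G_0,i,\ket{0})$ returns successfully, then by the Propagation Lemma we have $s'_0 = \extt^{\pro}_{s_0}(i,\ket{0})$ and $G'_0 = G_{s'_0}$, and since $s_0$ is closed we have $s_0(i) = \Empty$, so $i \in \supp(s'_0)\setminus\supp(s_0)$ and $s'_0$ is a \emph{proper} extension; also by the Propagation Lemma closedness and the pre-solution property are preserved (note that $H_{s_0}$ having only rank-1 entangled constraints is exactly the hypothesis needed to invoke the Propagation Lemma, and it is inherited by $H_{s'_0}$, a sub-Hamiltonian). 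The updates $s_1:=s_0,\ G_1:=G_0$ give $s'_1=s'_0$, $G'_1=G'_0$. Wait — I should double-check that ``proper closed extension'' together with $G'_0=G_{s'_0}$ is all Statement~1 asks in this branch; it is.

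Next, the unsuccessful branch, which is the heart of the lemma. When {\sf Propagation} fails, by Statement~2 of the Propagation Lemma there is a vertex $j$ with $|s_0(j)|>1$, witnessed by two propagating paths $p_1,p_2$ from $i$ to $j$ with $\pro(s_0,p_1,\ket{0}) = \ket{\beta_1} \neq \ket{\beta_2} = \pro(s_0,p_2,\ket{0})$; since $H_{s_0}$ has only entangled rank-1 constraints, Corollary~\ref{cor:sliding} applies to each path, producing entangled 2-qubit states $\slide(p_1),\slide(p_2)$ on $(i,j)$. The key claim to establish is that $\slide(p_1)$ and $\slide(p_2)$ are \emph{not} proportional. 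Suppose they were; then $\ket{\gamma}\bra{\gamma}_{ij}$ with $\ket{\gamma}=\slide(p_1)$ is a single entangled projector, and by the ``moreover'' part of Corollary~\ref{cor:sliding} it would propagate $\ket{0}_i$ to $\ket{\beta_1}_{j}$ \emph{and} to $\ket{\beta_2}_{j}$; but Lemma~\ref{lem:propagate0} says an entangled rank-1 projector propagates each state to a \emph{unique} target, contradiction. Hence $\spa\{\slide(p_1),\slide(p_2)\}$ is genuinely 2-dimensional, and by Fact~\ref{lem:rank2product} it contains a product state, which we write $\ket{\alpha^\bot}_i\otimes\ket{\beta^\bot}_j$; this is the product edge that {\sf ProbePropagation} feeds to {\sf ParallelPropagation}.

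It then remains to see that the preconditions of Lemma~\ref{lem:parallel} are met after the {\bf undo}: $s_0$ is again closed, $G_0 = G_{s_0}$, $s_1=s_0$, $G_1=G_0$, and $H_{s_0}$ has no rank-3 constraints (it has only entangled rank-1 ones). The one subtlety is that $\ket{\alpha^\bot}_i\otimes\ket{\beta^\bot}_j$ need not correspond to an \emph{existing} edge of $G_{s_0}$ — it is a \emph{derived} constraint obtained by sliding. So before invoking Lemma~\ref{lem:parallel} I would argue that one may harmlessly add the projector $\ket{\alpha^\bot}\bra{\alpha^\bot}_i\otimes\ket{\beta^\bot}\bra{\beta^\bot}_j$ to $H_{s_0}$ without changing its ground space: indeed by Corollary~\ref{cor:sliding} the ground space of $\Pi_{p_1}$ (the sum along $p_1$, all of whose edges are in $G_{s_0}$) equals that of $\sum_{\text{inner edges}} + \slide(p_1)\slide(p_1)^\dagger$, so $\slide(p_1)\slide(p_1)^\dagger$ annihilates every ground state of $H_{s_0}$; likewise $\slide(p_2)$, hence so does every vector in their span, in particular our product projector. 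With that, Lemma~\ref{lem:parallel} (Statements 1--3) gives the conclusions: if it does not report unsatisfiability, $s'_0$ is a proper closed extension of $s_0$ with $G'_0=G_{s'_0}$ and $s'_1=s'_0,G'_1=G'_0$, and pre-solutions are preserved (using that only solutions assigning $\ket{\alpha}$ to $i$ or $\ket{\beta}$ to $j$ can satisfy $H_{s_0}$, which holds because the added product projector is satisfied by no other local assignment and is a valid constraint on the ground space); if it reports unsatisfiability, so is $H$. Finally, Statement~3: the failed {\sf Propagation} and its {\bf undo} cost $O(|E_{s_0}| - |E_{s'_0}|)$ at most — actually it touches only edges eventually removed, which I would bound by the edges removed in the subsequent successful {\sf ParallelPropagation} plus the work is linear in $|E_{s_0}|-|E_{s'_0}|$ — finding $p_1,p_2$ is a byproduct of the BFS, and Fact~\ref{lem:rank2product} costs $O(1)$, so the total is $O(|E_{s_0}|-|E_{s'_0}|)$ by Lemma~\ref{lem:parallel}.

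The main obstacle I anticipate is the cost accounting in the unsuccessful branch: the probe {\sf Propagation} that fails does work that is ``wasted,'' and one must argue this wasted work is still $O(|E_{s_0}| - |E_{s'_0}|)$, i.e.\ proportional to the edges genuinely removed by the ensuing {\sf ParallelPropagation}. The clean way is to observe that every edge the failed probe examined lies in the connected region around $i$ that {\sf ParallelPropagation} will also sweep and remove, so the wasted work is dominated by $|E_{s_0}| - |E_{s'_0}|$; making this precise (e.g.\ that the two propagating paths $p_1,p_2$ and everything the BFS touched sits inside $E_{s_0}\setminus E_{s'_0}$) is the one place the argument needs genuine care rather than a direct citation.
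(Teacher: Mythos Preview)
Your proposal is correct and follows essentially the same approach as the paper; in fact you are more careful on two points the paper glosses over (the linear independence of $\slide(p_1)$ and $\slide(p_2)$, and the fact that the derived product constraint is not literally an edge of $G_{s_0}$). For the complexity obstacle you flag at the end, the paper's clean resolution is exactly the observation you are circling: since \emph{all} edges of $G_{s_0}$ are entangled, Lemma~\ref{lem:propagate0} says every edge propagates \emph{any} incoming state, so every edge examined by the failed probe {\sf Propagation}$(s_0,G_0,i,\ket{0})$ is also examined and removed by the successful branch of the ensuing {\sf ParallelPropagation}, giving the $O(|E_{s_0}|-|E_{s'_0}|)$ bound directly.
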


\begin{proof}
  If the procedure doesn't output ``$H$ is unsatisfiable" then 
  either {\sf Propagation}$(s_0, G_0, i, \ket{0})$ or one of the parallel
  propagations (say {\sf Propagation}$(s_0, G_0, i, \ket{\alpha})$) terminates
  successfully. Then $s'_0$ is a proper extension of $s_0$ since
  $s_0$ is closed and therefore $s_0(i_0) = \Empty$. Obviously
  $s'_1=s'_0$ and $ G'_1=G'_0$, and all other claims follow from the
  Propagation Lemma.

  Let's suppose that all three propagations are unsuccessful. By
  Corollary~\ref{cor:sliding}, any solution for $H_{s_0}$ also
  satisfies $\ket{\alpha^\bot}_i \otimes \ket{\beta^\bot}_j \bra{\alpha^\bot}_i
  \otimes \bra{\beta^\bot}_j $.  Then Lemma~\ref{lem:parallel} implies
  that $H$ is unsatisfiable.

  For the complexity analysis the interesting case is when the first
  propagation, that we call \texttt{Prop$_{{\rm failure}}$}, is unsuccessful but one of
  the two parallel propagations is successful. Let's call this
  successful one \texttt{Prop$_{{\rm success}}$}.  The main observation here is that
  every propagating edge in \texttt{Prop$_{{\rm failure}}$} will also be propagating in
  \texttt{Prop$_{{\rm success}}$}, since by Lemma~\ref{lem:propagate0} entangled edges
  always propagate. The paths $p_1$ and $p_2$ can be found in time
  proportional to the size of the subgraph visited by \texttt{Prop$_{{\rm failure}}$}.
  Indeed, observe that the edges of the two paths, except the last edge of one of the two,
  are edges in the BFS tree underlying \texttt{Prop$_{{\rm failure}}$}.
  The way from a vertex to the root of the tree can be then found, for example, by maintaining for each vertex in the tree,
  a pointer towrads its father.
  The product state $\ket{\alpha} \otimes \ket{\beta}$ can be
  found in constant time by Lemma~\ref{lem:rank2product}. Therefore,
  by Lemma~\ref{lem:parallel}, the complexity is indeed $O( |E_{s_0}
  | - |E_{s'_0}|)$.
\end{proof}

\subsection{Analysis of the algorithm}
\label{subsec:analysis}

\textbf{Proof of Theorem~\ref{thm:main}} 
If $H$ is frustration free then by Lemma~\ref{lem:maxrank} {\sf
MaxRankRemoval} outputs a pre-solution $s_0$ that satisfies every
maximal rank constraint. By the Propagation Lemma, at the end of
Phase two, in addition $s_0$ is a closed. By
Lemma~\ref{lem:parallel} {\sf ParallelPropagation} outputs $s_0$
such that in addition in $H_s$ there are only entangled constraints.
By Lemma~\ref{lem:probe} at the end of the algorithm in addition
$H_s$ is empty, and therefore $s$ is a solution.

If the algorithm doesn't output ``$H$ is unsatisfiable" then
by  Lemma~\ref{lem:maxrank}, by the Propagation Lemma, and by Lemmas~\ref{lem:parallel}
and~\ref{lem:probe} it outputs 
a coherent assignment $s$ such that $G_s$ is the empty graph, and therefore $s$ 
is a solution.


The complexity of {\sf MaxRankRemoval} by Lemma~\ref{lem:maxrank} 
is $O(|E|)$. After the second phase, the propagation of the assigned values 
during {\sf MaxRankRemoval}, the copying of $s_0$ and $G_0$ into respectively
$s_1$ and $G_1$ can be done by executing the same propagation steps this time
with $s_1$ and $G_1$. The complexity of the rest of the algorithm by the 
Propagation Lemma, and Lemmas~\ref{lem:parallel} and \ref{lem:probe}
is a telescopic sum which sums up to also $O(|E|)$.

\bibliographystyle{alpha}
\bibliography{qhc}

\newcommand{\etalchar}[1]{$^{#1}$}
\begin{thebibliography}{CCD{\etalchar{+}}11}

\bibitem[APT79]{APT79}
B.~Aspvall, M.~Plass, and R.~E. Tarjan.
\newblock A linear-time algorithm for testing the truth of certain quantified
  boolean formulas.
\newblock {\em Inf. Process. Lett.}, 8(3):121--123, 1979.
\newblock Erratum: Information Processing Letters 14(4): 195 (1982).

\bibitem[Bra11]{Bra06}
S.~Bravyi.
\newblock Efficient algorithm for a quantum analogue of 2-{SAT}.
\newblock In K.~Mahdavi, D.~Koslover, and L.~L. Brown, editors, {\em
  Contemporary Mathematics}, volume 536. American Mathematical Society, 2011.

\bibitem[CCD{\etalchar{+}}11]{CCD+11}
J.~Chen, X.~Chen, R.~Duan, Z.~Ji, and B.~Zeng.
\newblock No-go theorem for one-way quantum computing on naturally occurring
  two-level systems.
\newblock {\em Physical Review A}, 83(5):050301, 2011.

\bibitem[Coo71]{Cook71}
S.~Cook.
\newblock The complexity of theorem proving procedures.
\newblock In {\em Proceedings of the {T}hird {A}nnual {ACM} {S}ymposium}, pages
  151--158, New York, 1971. ACM.

\bibitem[dBG15]{BG}
N.~de~Beaudrap and S.~Gharibian.
\newblock A linear time algorithm for quantum 2-{SAT}.
\newblock {\em CoRR}, abs/1508.07338, 2015.
\newblock To appear in 31st {C}onference on {C}omputational {C}omplexity.

\bibitem[DLL62]{DPLL62}
M.~Davis, G.~Logemann, and D.~Loveland.
\newblock A machine program for theorem-proving.
\newblock {\em Commun. ACM}, 5(7):394--397, July 1962.

\bibitem[DP60]{DP60}
M.~Davis and H.~Putnam.
\newblock A computing procedure for quantification theory.
\newblock {\em J. ACM}, 7(3):201--215, July 1960.

\bibitem[ECP10]{ECP10}
J.~Eisert, M.~Cramer, and M.~Plenio.
\newblock Area laws for the entanglement entropy - a review.
\newblock {\em Reviews of Modern Physics}, 82(277), 2010.

\bibitem[EIS76]{EIS76}
S.~Even, A.~Itai, and A.~Shamir.
\newblock On the complexity of timetable and multicommodity flow problems.
\newblock {\em SIAM J. Comput.}, 5(4):691--703, 1976.

\bibitem[GN13]{GN13}
D.~Gosset and D.~Nagaj.
\newblock Quantum 3-sat is qma1-complete.
\newblock {\em 2013 IEEE 54th Annual Symposium on Foundations of Computer
  Science}, 0:756--765, 2013.

\bibitem[JWZ11]{JWZ11}
Z.~Ji, Z.~Wei, and B.~Zeng.
\newblock Complete characterization of the ground-space structure of two-body
  frustration-free hamiltonians for qubits.
\newblock {\em Physical Review A}, 84:042338, 2011.

\bibitem[Kar72]{Karp72}
R.~Karp.
\newblock Reducibility among combinatorial problems.
\newblock In Raymond~E. Miller and James~W. Thatcher, editors, {\em Complexity
  of Computer Computations}, The IBM Research Symposia Series, pages 85--103.
  Plenum Press, New York, 1972.

\bibitem[Kit03]{ref:toric}
A.~Kitaev.
\newblock Fault-tolerant quantum computation by anyons.
\newblock {\em Annals of Physics}, 303(1):2--30, 2003.

\bibitem[Kro67]{Krom76}
M.~Krom.
\newblock The decision problem for a class of first-order formulas in which all
  disjunctions are binary.
\newblock {\em Mathematical Logic Quarterly}, 13(1-2):15--20, 1967.

\bibitem[KSV02]{Kit02}
A.~Kitaev, A.~Shen, and M.~Vyalyi.
\newblock {\em Classical and Quantum Computation}.
\newblock American Mathematical Society, Boston, MA, USA, 2002.

\bibitem[Lev73]{Lev73}
L.~Levin.
\newblock Universal sequential search problems.
\newblock {\em Problems of Information Transmission}, 9(3):265--266, 1973.

\bibitem[LMSS10]{ref:Laumann09}
C.~Laumann, R.~Moessner, A.~Scardicchio, and S.~Sondhi.
\newblock {Phase transitions and random quantum satisfiability}.
\newblock {\em Quantum Information \& Computation}, 10(1), 2010.

\bibitem[Pap91]{Pap91}
C.~Papadimitriou.
\newblock On selecting a satisfying truth assignment (extended abstract).
\newblock In {\em 32nd Annual Symposium on Foundations of Computer Science, San
  Juan, Puerto Rico, 1-4 October 1991}, pages 163--169, 1991.

\bibitem[Sac07]{ref:Sachdev}
S.~Sachdev.
\newblock {\em Quantum phase transitions}.
\newblock Wiley Online Library, 2007.

\bibitem[VLRK03]{ref:Vidal-2003}
G.~Vidal, J.-I. Latorre, E.~Rico, and A.~Kitaev.
\newblock Entanglement in quantum critical phenomena.
\newblock {\em Phys. Rev. Lett.}, 90:227902, Jun 2003.

\end{thebibliography}

\end{document}